\def\R{\mathbb{R}}
\newtheorem{thm}{Theorem}
\newtheorem{lem}{Lemma}
\newtheorem{defin}{Definition}
\def\bajo#1#2{\smash{\mathop{#1}\limits_{#2}}}  %%%% una flecha, una igualdad, puntos, etc. (#1).
\title{Quantum codes do not fix qubit independent errors}
\author{J. Lacalle\thanks{Dep. de Matem\' atica Aplicada a las TIC, ETS de Ingenier\' ia de Sistemas Inform\' aticos, Universidad Polit\' ecnica de Madrid, C/ Alan Turing s/n, 28031, Madrid, Spain (jesus.glopezdelacalle\@upm.es).}, L.M. Pozo Coronado\thanks{Dep. de Matem\' atica Aplicada a las TIC, ETS de Ingenier\' ia de Sistemas Inform\' aticos, Universidad Polit\' ecnica de Madrid, C/ Alan Turing s/n, 28031, Madrid, Spain (lm.pozo@upm.es).}, A.L. Fonseca de Oliveira\thanks{Facultad de Ingenier\' ia, Universidad ORT, Montevideo, Uruguay (fonseca@ort.edu.uy).},\\ R. Mart\' in-Cuevas\thanks{Programa de Doctorado en Ciencias y Tecnolog\' ias de la Computaci\' on para Smart Cities, ETS de Ingenier\' ia de Sistemas Inform\' aticos, Universidad Polit\' ecnica de Madrid, C/ Alan Turing s/n, 28031, Madrid, Spain (r.martin-cuevas@alumnos.upm.es).}}
\date{}
\begin{document}

\maketitle

\begin{abstract}
In this work we prove that the $5-$qubit quantum error correcting code~\cite{BDSW,LMPZ} does not fix qubit independent errors (Theorem~\ref{Thm:QCodesDoNotFixErrors}), even assuming that the correction circuit does not introduce new errors. We say that a quantum code does not fix a quantum computing error if its application does not reduce the variance of the error. We also prove for qubit independent errors that if the correction circuit of the $5-$qubit quantum code detects an error, the corrected state has central symmetry (Theorem~\ref{Thm:Isotropy}) and, as a consequence, its variance is maximum (Lemma~\ref{Lem:VarianzaConError1}).

We have been able to obtain these results thanks to the high symmetry of the $5-$qubit quantum code and we believe that the necessary calculations for less symmetric codes are extremely complicated but that, despite this, the results obtained for the $5-$qubit quantum code reveal a general behavior pattern of quantum error correcting codes against qubit independent errors.

{\sl Keywords}: $5-$qubit quantum code, quantum error correcting codes, qubit independent quantum computing errors, quantum computing error variance.
\end{abstract}

\section{Introduction}
It is well known that the main challenge to achieve an efficient quantum computation is the control of quantum errors~\cite{Ga}. To address this problem, two fundamental tools have been developed: quantum error correction codes~\cite{CS,St1,Go1,CRSS} in combination with fault tolerant quantum computing~\cite{Sh1,Pr,St3,Go2,KLZ,Ki,AB}.

In this article we study the effectiveness of the $5-$qubit quantum error correcting code~\cite{BDSW,LMPZ} to fix qubit independent quantum computing errors. We restrict the study to this specific quantum code because it is extremely difficult to perform the necessary calculations for more general quantum codes. Only the high degree of symmetry of the $5-$qubit quantum code allows carrying out the aforementioned calculations. However, we believe that the ability of this code to fix qubit independent errors will show a pattern of the behavior of general quantum codes against this type of quantum computing errors. Proof of this is the perfect adaptation of this code to qubit independent errors using the least number of qubits.

However, the analysis of qubit independent errors will be general. In order to do that, we represent $n-$qubits as points of the unit real sphere of dimension $d=2^{n+1}-1$~\cite{NC}, $S^d=\{x\in\R^{d+1}\ |\ \|x\| =1\}$, taking coordinates with respect to the computational basis $[|0\rangle,|1\rangle,\dots,|2^n-1\rangle]$,
\begin{eqnarray}
\label{QubitFormula}
\Psi=(x_0+ix_1,x_2+ix_3,\dots,x_{d-1}+ix_{d}).
\end{eqnarray}

Following previous works~\cite{LP,LPF}, we consider quantum computing errors as random variables with density function defined on $S^d$. As mentioned in these articles, it is easy to relate this representation to the usual representation in quantum computing by density matrices. In fact, if $X$ is a quantum computing error with density function $f(x)$, then the density matrix of $X$, $\rho(X)$, is obtained as follows, using the pure quantum states given by Formula (\ref{QubitFormula}):
$$
\rho(X)=\int_{S^d}f(x)|\Psi\rangle\langle\Psi|dx\quad\text{where}\quad\int_{S^d}f(x)dx=1.
$$

Density matrices do not always discriminate different quantum computing errors~\cite{NC}. Therefore, representations of quantum computing errors by random variables are more accurate than those by density matrices. Beside other considerations, while the space of random variables over $S^d$ is infinite-dimensional, the space of $n-$qubit density matrices has finite dimension. This is the main reason why the authors decided to use random variables to represent quantum computing errors. And once the representation of quantum computing errors is established by random variables, the most natural parameter to measure the size of quantum computing errors is the variance.

As described in~\cite{LP,LPF}, the variance of a random variable $X$ is defined as the mean of the quadratic deviation from the mean value $\mu$ of $X$, $V(X)=E[\|X-\mu\|^2]$. In our case, since the random variable $X$ represents a quantum computing error, the mean value of $X$ is the $n-$qubit $\Phi$ resulting from an errorless computation. Without loss of generality, we will assume that the mean value of every quantum computing error will always be $\Phi=|0\rangle$. To achieve this, it suffices to move $\Phi$ into $|0\rangle$ through a unitary transformation. Therefore, using the pure quantum states given by Formula (\ref{QubitFormula}), the variance of $X$ will be
\begin{eqnarray}
\label{VarCalculo}
V(X)=E[\|\Psi-\Phi\|^2]=E[2-2x_0]=2-2\int_{S^d}x_0f(x)dx.
\end{eqnarray}

In~\cite{LP} the variance of the sum of two independent errors on $S^d$ is presented for the first time. It is proved for isotropic errors and it is conjectured in general that
\begin{eqnarray}
\label{VarFormula}
V(X_1+X_2)=V(X_1)+V(X_2)-\frac{V(X_1)V(X_2)}{2}.
\end{eqnarray}

To relate the variance to the most common error measure in quantum computing, fidelity~\cite{NC}, the authors define a quantum variance that takes into account that quantum states are equivalent under multiplication by a phase. Thereby, the quantum variance of a random variable $X$ is defined as:
$$
V_q(X)=E[\bajo{\text{min}}{\phi}(\|\Psi-e^{i\phi}\Phi\|^2)]=2-2E\left[\sqrt{x_0^2+x_1^2}\right].
$$

The fidelity of the random variable $X$, $F(X)$, with respect to the pure quantum state $\Phi=|0\rangle$ satisfies $F(X)=\sqrt{\langle \Phi|\rho(X)|\Phi\rangle}$~\cite{NC}. Therefore, using the pure quantum states given by Formula (\ref{QubitFormula}), $F(X)^2=E[\langle \Phi|\Psi\rangle\langle \Psi|\Phi\rangle]=E[|\langle \Phi|\Psi\rangle|^2]=E[x_0^2+x_1^2]$. Now, the property $\sqrt{x_0^2+x_1^2}\geq x_0^2+x_1^2$ and Jensen's inequality $\sqrt{E[x_0^2+x_1^2]}\geq E[\sqrt{x_0^2+x_1^2}]$ allow us to conclude that:
$$
1-\dfrac{V_q(X)}{2}\leq F(X)\leq\sqrt{1-\dfrac{V_q(X)}{2}}.
$$

These inequalities show that quantum variance and fidelity are essentially equivalent, since when quantum variance tends to $0$, fidelity tends to $1$ and, conversely, when fidelity tends to $1$, quantum variance tends to $0$. Of the three measures, the variance is the only one that allows to complete the complicated calculations necessary to estimate the correction capacity of the 5-qubits code. It is also the standard measure in the statistical treatment of errors.

But the correct measure for quantum errors is the quantum variance, which as we have seen is equivalent to fidelity. However, we are going to see that the variance and the quantum variance have similar behaviors for the type of error that we want to analyze. Let $\Phi=|0\rangle$ be a qubit and suppose that $\Phi$ is changed by error becoming the state $\Psi=W\Phi$, where $W$ is the error operator given by the formula (\ref{For:W}) below whose density function $f(\theta_0)$ only depends on the angle $\theta_0$. Then:
$$
\begin{array}{lll}
\Psi & = & (\cos(\theta_0) + i\sin(\theta_0)\cos(\theta_1))\,|0\rangle + \\
     &   & (\sin(\theta_0)\sin(\theta_1)\cos(\theta_2) + i\sin(\theta_0)\sin(\theta_1)\sin(\theta_2))\,|1\rangle\ \text{and,} \\
\end{array}
$$
taking into account that
$$
\bajo{\text{min}}{\phi}(\|\Psi-e^{i\phi}\Phi\|^2)=2-2|\langle \Psi |\Phi \rangle|
$$
and the equation (\ref{VarCalculo}) we obtain:
$$
\begin{array}{lll}
V_q(X) & = & \displaystyle 2-4\pi \int_0^\pi \left(1 - \dfrac{\cos^2(\theta_0)}{2\sin(\theta_0)}\log\left(\dfrac{1-\sin(\theta_0)}{1+\sin(\theta_0)}\right) \right)\cdot f(\theta_0)\sin^2(\theta_0)\,d\theta_0 \\ \\
V(X)   & = & \displaystyle 2-4\pi \int_0^\pi 2\cos(\theta_0)\cdot f(\theta_0)\sin^2(\theta_0)\,d\theta_0 \\ \\
\end{array}
$$
We observe that the difference between the quantum variance and the variance are the weight functions of $f(\theta_0)\sin^2(\theta_0)$ in the integral and that they have a similar behavior for small errors, that is, for concentrated density functions $f(\theta_0)$ around $\theta_0=0$ (see Figure~\ref{Fig:WeightFunctions}).

\begin{figure}[th]
\label{Fig:WeightFunctions}\quad
\begin{center}
        \includegraphics[scale=0.25]{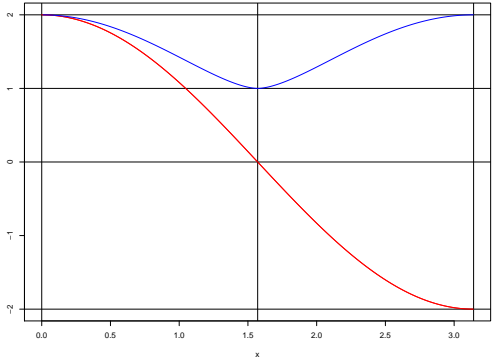}
        \caption{\centerline{Weight functions for quantum variance (red) and variance (blue).}}
\end{center}
\end{figure}

Even for large errors, for example a uniform distribution function $\displaystyle f=\dfrac{1}{2\pi^2}$, we have comparable values of the quantum variance and the variance:
$$
V_q (\Psi) = \dfrac{2}{3}\qquad \text{and}\qquad V(\Psi)=2.
$$

The study of the effectiveness of quantum error correcting codes has become essential to face the challenge of quantum computing. In this context, the problem we want to address is the following: Let $\Phi$ be an $5-$qubit encoded by the $5-$qubit quantum code $\mathcal C$. Suppose that the coded state $\Phi$ is changed by error, becoming the state $\Psi$. Now, to fix the error we apply the code correction circuit, obtaining the final state $\tilde\Phi$. While $\Phi$ is a pure state, $\Psi$ and $\tilde\Phi$ are random variables (mixed states). Our goal is to compare the variance of $\tilde\Phi$, $V(\tilde\Phi)=E[\|\tilde\Phi-\Phi\|^2]$, with that of $\Psi$, $V(\Psi)=E[\|\Psi-\Phi\|^2]$.

In order to compare the variances we will assume that the corrector circuit of $\mathcal C$ does not introduce new errors. In other words, we are going to estimate the theoretical capacity of the code to correct quantum computing errors. One would ideally expect that $\tilde\Phi=\Phi$ so that the variance of $\tilde\Phi$ would be $V(\tilde\Phi)=0$. Being more practical, we are only going to demand the minimum that could possibly be asked from an error correction process: $V(\tilde\Phi)<V(\Psi)$. If this minimum requirement is not met, we will say that the code $\mathcal C$ does not fix the corresponding quantum computing error.

The problem we address is, in our opinion, one of the biggest challenges for quantum computing. Consequently, it is also one of the most difficult tasks. For this reason, we restrict the problem in two ways. On the one hand, we consider the most widespread type of quantum error in the literature; qubit independent quantum computing errors. And on the other, we analyze the quantum code best adapted to this type of error; the $5-$qubit quantum code. This choice allows us to effectively compute the variances of the disturbed and corrected states, $\Psi$ and $\tilde\Phi$. The results obtained for this specific code reveal a general behavior pattern of quantum error correcting codes against qubit independent errors. The extension of the results of the $5-$qubit quantum code to general codes is analyzed in the conclusions of the article.

The results that we obtain are analogous to those presented in~\cite{LPF} for isotropic errors. But, although in both cases the results are as expected, what is surprising is their similarity despite the very different characteristics of the two types of error. The isotropic errors do not occur naturally and their density functions have a support of dimension $2^{n+1}-1$, while the qubit independent errors are commonly used to model quantum decoherence and their density functions have a much smaller dimension support, $4n$. Despite these great differences, these two types of errors present two analogies in relation to the ability of quantum codes to correct them: quantum error correcting codes do not fix these types of error and when the correction circuit of a quantum code detects one of these errors, the corrected state has the maximum variance and as a result, it already loses all the computing information.

The results are as expected because no quantum code can correct errors in all qubits simultaneously. The conclusions of this work and the one cited above~\cite{LPF} seem contradictory with the quantum threshold theorem, proved in the framework of fault-tolerant quantum computing~\cite{Sh1,Pr,St3,Go2,KLZ,Ki,AB}. The results are different because we use different error model. In fault-tolerant quantum computing the discretized quantum error model is used. In this model errors (that can be arbitrarily large) occur with a given probability $p$ and the probability that there is an error in $k$ qubits at the same time is $p^k$. Clearly this error model does not include the qubit independent continuous quantum errors that we use. Indeed in this model the probability of errors occurring in all qubits simultaneously is $1$ (that are small with high probability). We believe that it is necessary to develop the continuous quantum error model because the discrete one does not capture all the peculiarities of the real quantum computing errors.

The outline of the article is as follows: in Section 2 we set up the general structure of quantum error correcting codes; in Section 3 we analyze qubit independent quantum computing errors, we calculate the variance of the disturbed state $\Psi$ and we introduce the normal distribution over each qubit, as a particular case of type of error; in Section 4 we establish the result of applying the correction circuit of the $5-$qubit quantum code to a qubit independent error, we prove that this code does not fix qubit independent quantum computing errors and we analyze the behavior of a qubit independent error with normal distribution; finally, in Section 5 we analyze the results, we study if the behavior pattern of the $5-$qubit code can be extended to more general quantum codes and we comment on the problems that these results and those of article~\cite{LPF} pose to the viability of quantum computing.

\section{Quantum error correcting codes}

An quantum error correcting code of dimension $[n, m]$ is a subspace $\mathcal C$ of dimension $d^{\prime}=2^m$ in the $n-$qubit space ${\mathcal H}^n$, whose dimension is $d=2^n$. The $\mathcal C$ quantum code encoding function is a unitary operator $C$ that satisfies the following properties:
$$
C:\,{\mathcal H}^m\otimes{\mathcal H}^{n-m}\,\to\,{\mathcal H}^n \ \text{and}\ {\mathcal C}=C({\mathcal H}^m\otimes|0\rangle).
$$

The $\mathcal C$ code fixes $d^{\prime\prime}=2^{n-m}$ discrete errors: $E_0,\ E_1,\ \dots,\ E_{d''-1}$. Since the identity $I$ should be among these unitary operators, we assume that $E_0=I$. This process of discretization of errors allows to correct any of them if the subspaces $S_s=E_s({\mathcal C})$, $0\leq s < d^{\prime\prime}$, satisfy the following property:
\begin{equation}
\label{For:OrthogonalSum}
{\mathcal H}^n=S_0\,\bot\,S_1\,\cdots\,\bot\,S_{d''-1}.
\end{equation}
That is, ${\mathcal H}^n$ is the orthogonal direct sum of said subspaces. Note also that $S_0=E_0({\mathcal C})=I({\mathcal C})={\mathcal C}$. In the stabilized code formalism, the code $\mathcal C$ is the subspace of fixed states of an abelian subgroup of the Pauli group ${\mathcal P}_n=\{\pm 1,\pm i\}\times\{I,X,Z,Y\}^n$ and discrete errors are operators of ${\mathcal P}_n$ that anti-commute with any of the subgroup generators, except for the identity operator $E_0$. If Formula~(\ref{For:OrthogonalSum}) holds, the code is non-degenerate.

Suppose that a coded state $\Phi$ is changed by error, becoming the state $\Psi$. The initial state is a code state, that is, $\Phi\in S_0$, while the final state in general is not, that is, $\Psi\not\in S_0$. If the disturbed state belongs to the subspace $W_{\Phi}=L(E_0\Phi,\dots,E_{d''-1}\Phi)$, that is, if it is of the form
\begin{equation}
\label{For:FixableErrors}
\Psi=\alpha_0E_0\Phi+\cdots+\alpha_{d''-1}E_{d''-1}\Phi\quad\text{with}\quad|\alpha_0|^2+\cdots+|\alpha_{d''-1}|^2=1,
\end{equation}
then the quantum code allows us to retrieve the initial state $\Phi$. To achieve this, we measure $\Psi$ with respect to the orthogonal decomposition of the Formula~(\ref{For:OrthogonalSum}). The result will be $\frac{\alpha_s}{|\alpha_s|}E_s\Phi$ for a value $s$ between $0$ and $d^{\prime\prime}-1$. The value of $s$ is called syndrome and allows us to identify the discrete error that the quantum measurement indicates. Then, applying the quantum operator $E_s^{-1}$ we obtain $\frac{\alpha_s}{|\alpha_s|}\Phi$. This state is not exactly $\Phi$ but, differing only in a phase factor, both states are indistinguishable from the point of view of Quantum Mechanics. Therefore, the code has fixed the error.

An error that does not satisfy Formula~(\ref{For:FixableErrors}), that is, it does not belong to $W_{\Phi}$, cannot be fixed exactly. For example, if $\Psi$ belongs to the code subspace $\mathcal C$, the error cannot be fixed at all since, being a code state, it is assumed that it has not been disturbed. In this work we want to analyze the limitation in the correction capacity of an arbitrary code, assuming that the code correction circuit does not introduce new errors.

Finally, we want to highlight that discrete errors can be chosen so that, for example, all errors affecting a single qubit are fixed. The best code with this feature that encodes one qubit is the $5$-qubit quantum code~\cite{BDSW,LMPZ}. This code is optimal in the sense that no code with less than $5$ qubits can fix all the errors of one qubit and this article focuses on the ability of this code to fix qubit independent errors.

\section{Qubit independent quantum computing errors}

As in the previous section, let $\Phi$ be an initial code state, let $\mathcal C$ be the code with which $\Phi$ has been encoded and let $\Psi$ be the final state caused by an error on $\Phi$. This error can be modeled by means of a unitary operator $\mathcal W$. Therefore, the disturbed state will be $\Psi={\mathcal W}\Phi$. The model of qubit independent quantum errors is the simplest to represent decoherence in Quantum Computing. And, applied to a single qubit, it also allows us to model the error resulting from applying a quantum gate or a quantum measure to said qubit. Throughout this work we will use the following one-qubit error (unitary) operator:
$$
W=\left(\begin{array}{c}
\cos(\theta_0) + i\sin(\theta_0)\cos(\theta_1) \\
\sin(\theta_0)\sin(\theta_1)\cos(\theta_2) + i\sin(\theta_0)\sin(\theta_1)\sin(\theta_2) \\
\end{array}\right.
$$
\begin{equation}
\label{For:W}
\qquad\qquad\quad \left.\begin{array}{c}
-\sin(\theta_0)\sin(\theta_1)\cos(\theta_2) + i\sin(\theta_0)\sin(\theta_1)\sin(\theta_2) \\
\cos(\theta_0) - i\sin(\theta_0)\cos(\theta_1) \\
\end{array}\right),
\end{equation}
where $\theta_0,\theta_1\in[0,\pi]$ and $\theta_2\in[0,2\pi)$. The most important property of this operator is the following:
\begin{equation}
\label{For:PropertyW}
\|W\Phi-\Phi\|^2=2(1-\cos(\theta_0))\quad \text{for all qubit}\ \Phi.
\end{equation}
This fact implies that the variance of the error, which is defined as the expected value $E\left[\|W\Phi-\Phi\|^2\right]$ (see~\cite{LP,LPF}), is independent of the initial qubit $\Phi$. Formula~(\ref{For:PropertyW}) as well as many other results throughout the article is proved by mechanical calculations that are not made explicit. All of them are developed in~\cite{LPFM}.

Actually the error operator $W$ is a random variable that we are going to describe by means of its density function, whose natural domain is the three-dimensional sphere ${\mathcal S}_3$ parameterized in spherical coordinates by angles $\theta_0$, $\theta_1$ and $\theta_2$. From now on we will consider error operators $W$ whose density function depends exclusively on the first angle $\theta_0$. This fact greatly simplifies the analysis without losing generality, in view of the property indicated in Formula~(\ref{For:PropertyW}).

In the model of qubit independent quantum errors, the general $n-$qubit error operator is the following tensor product:
\begin{equation}
\label{For:ProductW}
{\mathcal W}=W_0\otimes\cdots\otimes W_{n-1},
\end{equation}
where $W_u$, $0\leq u<n$, is a one-qubit operator of the form described in Formula~(\ref{For:W}), with angles $\theta_0^u$, $\theta_1^u$ and $\theta_2^u$ and density function $f(\theta_0^u)$ defined on the sphere ${\mathcal S}_3$. Therefore ${\mathcal W}$ is a random variable defined on the space ${\mathcal S}_3^n = {\mathcal S}_3\times \cdots \times {\mathcal S}_3$.

Then, the disturbed state $\Psi={\mathcal W}\Phi$ is also a random variable. We are especially interested in its variance, as it was introduced in~\cite{LP,LPF}.

\begin{defin}
The variance of the random variable $\Psi$ is the expected value
$$
E\left[\|\Psi-\Phi\|^2\right].
$$
\end{defin}

Let $\tilde{\Phi}$ be the code state resulting from applying the code correction circuit to the state $\Psi$, assuming that this circuit does not introduce new errors. From the point of view of the statistical study of errors, the state $\tilde{\Phi}$ is a random variable. The random variable $\Psi$ describes the distribution of the error around $\Phi$. On the other hand, the random variable $\tilde{\Phi}$ describes the distribution of the error after correction around $\Phi$ in the code subspace $\mathcal C$ (since the accuracy of the correction circuit we are assuming implies that $\tilde{\Phi}$ belongs to $\mathcal C$).

In this context, in order to measure the code correction capacity, we compare the variance of the final error, variance of $\tilde{\Phi}$, with the variance of the initial error, variance of $\Psi$. Our study focuses on qubit independent quantum errors, and to describe the problem precisely, we study their variances.

\begin{thm}
The variance of a one-qubit quantum error with density function $f(\theta_0)$ is equal to
$$
V(\Psi) = 2 - 8\pi {\bar E}\left[\cos(\theta_0)\sin^2(\theta_0)\right],
$$
where $\displaystyle {\bar E}\left[\cos(\theta_0)\sin^2(\theta_0)\right]=\int_0^\pi f(\theta_0)\cos(\theta_0)\sin^2(\theta_0)d\theta_0$.
\label{Thm:VarianceOneQubit}
\end{thm}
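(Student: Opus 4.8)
The plan is to compute the variance $V(\Psi)=E\left[\|\Psi-\Phi\|^2\right]$ directly from the defining integral over the sphere $\mathcal{S}_3$, using the fact that the density function depends only on $\theta_0$. First I would invoke Formula~(\ref{For:PropertyW}), which gives the pointwise identity $\|W\Phi-\Phi\|^2=2(1-\cos(\theta_0))$ for every qubit $\Phi$. This is the crucial simplification: it reduces the integrand to a function of $\theta_0$ alone, independent of the initial state $\Phi$ and of the remaining angles $\theta_1,\theta_2$. Consequently the variance becomes
\begin{equation}
\label{For:VarProof}
V(\Psi)=E\left[2(1-\cos(\theta_0))\right]=2-2\,E\left[\cos(\theta_0)\right],
\end{equation}
where the expectation is taken with respect to the density $f(\theta_0)$ over $\mathcal{S}_3$.

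The next step is to make the integration measure explicit. Parameterizing $\mathcal{S}_3$ in spherical coordinates by $\theta_0,\theta_1\in[0,\pi]$ and $\theta_2\in[0,2\pi)$, the volume element carries the Jacobian factor $\sin^2(\theta_0)\sin(\theta_1)$. Since the density $f$ depends only on $\theta_0$, the integrals over $\theta_1$ and $\theta_2$ factor out and can be evaluated in closed form: the $\theta_2$-integration contributes $2\pi$, and $\int_0^\pi\sin(\theta_1)\,d\theta_1=2$, giving a combined angular factor of $4\pi$. I would carry out the normalization bookkeeping carefully so that the density $f(\theta_0)$ against the one-dimensional measure $d\theta_0$ absorbs these constants consistently with the stated definition of $\bar{E}$.

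Combining these pieces, the expectation in~(\ref{For:VarProof}) becomes
$$
E\left[\cos(\theta_0)\right]=4\pi\int_0^\pi f(\theta_0)\cos(\theta_0)\sin^2(\theta_0)\,d\theta_0=4\pi\,\bar{E}\left[\cos(\theta_0)\sin^2(\theta_0)\right],
$$
whence $V(\Psi)=2-8\pi\,\bar{E}\left[\cos(\theta_0)\sin^2(\theta_0)\right]$, as claimed. I do not expect a genuine conceptual obstacle here: the entire argument rests on the pointwise identity~(\ref{For:PropertyW}), which the paper has already established (by the mechanical calculations deferred to~\cite{LPFM}). The only point demanding care is the treatment of the normalization constants and the $\sin^2(\theta_0)$ Jacobian weight, ensuring that the factor $4\pi$ from the trivial $\theta_1,\theta_2$ integrations is correctly reconciled with the convention implicit in the definition of $\bar{E}$; getting this constant right is what fixes the coefficient $8\pi$ rather than some other multiple.
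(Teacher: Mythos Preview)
Your proposal is correct and follows essentially the same route as the paper: invoke Formula~(\ref{For:PropertyW}) to reduce $\|\Psi-\Phi\|^2$ to $2(1-\cos\theta_0)$, then integrate over $\mathcal{S}_3$ in spherical coordinates, with the $\sin^2(\theta_0)$ weight and the $|\mathcal{S}_2|=4\pi$ factor from the $\theta_1,\theta_2$ integrations producing the coefficient $8\pi$. The only cosmetic difference is that the paper carries out the computation with the one-qubit error embedded as $W\otimes I^{\otimes(n-1)}$ acting on an arbitrary $n$-qubit state and explicitly reduces the squared norm back to $2(1-\cos\theta_0)$ via a coordinate splitting, whereas you apply Formula~(\ref{For:PropertyW}) directly to a single qubit; the integration step is identical.
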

\begin{proof}
In this case the error operator is ${\mathcal W}=W_0\otimes I\otimes\cdots\otimes I$ and, if we take an arbitrary initial state, then:
$$
\Phi=\sum_{k=0}^{2^n-1}\alpha_k|k\rangle\quad\text{and}\quad \Psi-\Phi=\sum_{k=0}^{2^n-1}\alpha_k\left((W-I)\otimes I^\prime\right)|k\rangle,
$$
where $I$ and $I^\prime$ are the identity operators for $1$ and $n-1$ qubits respectively and $W$ is the error operator given in Formula~(\ref{For:W}).

Splitting the index $k$ into the bit $k_0$ and the $n-1$ bit $k^\prime$ we obtain:
$$
\Psi-\Phi=\sum_{k^\prime=0}^{2^{n-1}-1}\left(\sum_{k_0=0}^{1}\alpha_{k_0,k^\prime}(W-I)|k_0\rangle\right)\otimes|k^\prime\rangle.
$$

The square of the norm of the above difference is:
$$
\|\Psi-\Phi\|^2=\sum_{k^\prime=0}^{2^{n-1}-1}\left\|\sum_{k_0=0}^{1}\alpha_{k_0,k^\prime}(W-I)|k_0\rangle\right\|^2.
$$

Finally, applying the property of Formula~(\ref{For:PropertyW}) we obtain:
$$
\|\Psi-\Phi\|^2=2\sum_{k^\prime=0}^{2^{n-1}-1} \left(|\alpha_{0,k^\prime}|^2 + |\alpha_{1,k^\prime}|^2\right) \left(1-\cos(\theta_0)\right)=2(1-\cos(\theta_0)).
$$

Now, the expected value is calculated by integrating into the sphere ${\mathcal S}_3$. This integral, in spherical coordinates, is:
\begin{eqnarray*}
V(\Psi) & = & 2 - 2 \displaystyle\int_{{\mathcal S}_3} f(\theta_0)\cos(\theta_0)d_{{\mathcal S}_3} \\
& = & 2 - 2 \displaystyle |{\mathcal S}_2| \int_0^\pi f(\theta_0)\cos(\theta_0)\sin^2(\theta_0) d\theta_0, \\
\end{eqnarray*}
where $\sin^2(\theta_0)$ is the volume element corresponding to $\theta_0$ in ${\mathcal S}_3$ and $|{\mathcal S}_2|$ is the volume of the sphere of dimension $2$.

Finally, the theorem is proved by substituting the value of $|{\mathcal S}_2|=4\pi$ (see Appendix) in the previous expression.
\end{proof}

\begin{thm}
\label{Thm:VarianceNQubit}
The variance of $\Psi={\mathcal W}\Phi$, where ${\mathcal W}$ is an $n-$qubit independent quantum error with density function $f(\theta_0^u)$ for all qubit $0\leq u<n$, is
$$
V(\Psi) = 2 - 2\left(4\pi {\bar E}\left[\cos(\theta_0)\sin^2(\theta_0)\right]\right)^n,
$$
where $\displaystyle {\bar E}\left[\cos(\theta_0)\sin^2(\theta_0)\right]=\int_0^\pi f(\theta_0)\cos(\theta_0)\sin^2(\theta_0)d\theta_0$.
\end{thm}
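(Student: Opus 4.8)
The plan is to express the variance through the expectation of the random operator $\mathcal W$ itself, to factor that expectation as a tensor product using independence, and to reduce everything to a single explicit computation of $E[W]$ for one qubit. Since $\|\Psi\|=\|\Phi\|=1$, I would first write
$$
V(\Psi)=E\left[\|\mathcal W\Phi-\Phi\|^2\right]=2-2\,\mathrm{Re}\,E\left[\langle \mathcal W\Phi,\Phi\rangle\right]=2-2\,\mathrm{Re}\,\langle E[\mathcal W]\Phi,\Phi\rangle,
$$
where in the last equality I treat $\mathcal W$ as a matrix-valued random variable and pull the expectation through the inner product, which is linear in the slot containing $\mathcal W$.

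The next step exploits independence: because the one-qubit errors $W_0,\dots,W_{n-1}$ act on distinct tensor factors and are independent random variables, the expectation of the tensor product factors as
$$
E[\mathcal W]=E[W_0]\otimes\cdots\otimes E[W_{n-1}].
$$
Then I would compute $E[W]$ for a single qubit. Integrating the four entries of $W$ in Formula~(\ref{For:W}) against the density $f(\theta_0)$ over $\mathcal S_3$, whose volume element is $\sin^2(\theta_0)\sin(\theta_1)\,d\theta_0\,d\theta_1\,d\theta_2$, the off-diagonal entries vanish since $\int_0^{2\pi}\cos(\theta_2)\,d\theta_2=\int_0^{2\pi}\sin(\theta_2)\,d\theta_2=0$, and the imaginary parts $\pm i\sin(\theta_0)\cos(\theta_1)$ of the diagonal vanish since $\int_0^\pi\cos(\theta_1)\sin(\theta_1)\,d\theta_1=0$. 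What survives is only the real diagonal contribution $\cos(\theta_0)$, so that
$$
E[W]=c\,I,\qquad c=4\pi\,{\bar E}\left[\cos(\theta_0)\sin^2(\theta_0)\right],
$$
the factor $4\pi$ arising from integrating out $\theta_1$ and $\theta_2$. This is exactly the single-qubit content of Theorem~\ref{Thm:VarianceOneQubit}, and it encodes the $\Phi$-independence already visible in Formula~(\ref{For:PropertyW}).

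Finally, substituting gives $E[\mathcal W]=(cI)^{\otimes n}=c^n I_{2^n}$, hence $\langle E[\mathcal W]\Phi,\Phi\rangle=c^n\langle\Phi,\Phi\rangle=c^n$, which is real and independent of $\Phi$. Therefore $V(\Psi)=2-2c^n$, as claimed. I expect no serious obstacle: the only genuine computation is the single-qubit integral $E[W]=cI$, which is routine, and the conceptual point is that $E[\mathcal W]$ being a scalar matrix is precisely what makes the answer independent of the (possibly entangled) initial state $\Phi$. The one step deserving explicit care is the justification of $E[\mathcal W]=\bigotimes_u E[W_u]$, which uses exactly that the errors on distinct qubits are independent so that expectation distributes over the tensor product. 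An alternative would be induction on $n$ with Theorem~\ref{Thm:VarianceOneQubit} as the base case, but the entanglement of $\Phi$ across qubits makes the operator-expectation argument cleaner and avoids decomposing $\Phi$.
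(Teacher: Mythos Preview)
Your argument is correct and is genuinely different from the paper's proof. The paper proceeds by an induction on $n$, establishing first the pointwise identities in Formula~(\ref{For:PropertyProductW}),
$$
\left\langle ({\mathcal W}-I)|j\rangle \,\big|\, ({\mathcal W}-I)|k\rangle \right\rangle = 2\bigl(1-\cos(\theta_0^0)\cdots\cos(\theta_0^{n-1})\bigr)\delta_{j,k}+P,
$$
where $P$ collects monomials that vanish upon integration; only after this somewhat laborious bookkeeping does the paper expand $V(\Psi)$ in the computational basis and integrate term by term. You instead take the expectation at the operator level at once, use independence to factor $E[\mathcal W]=\bigotimes_u E[W_u]$, and reduce everything to the single explicit computation $E[W]=cI$. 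This bypasses the induction entirely and makes the $\Phi$-independence transparent: it is precisely the statement that $E[\mathcal W]$ is scalar. The paper's route has the minor byproduct of the pointwise formula~(\ref{For:PropertyProductW}), but since that formula is not reused elsewhere, your approach is strictly more economical. The only point worth stating explicitly when you write it up is that ``expectation of a tensor product of independent matrix-valued random variables is the tensor product of expectations'' follows entrywise from ordinary independence, since each entry of $W_0\otimes\cdots\otimes W_{n-1}$ is a product of entries of the $W_u$.
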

\begin{proof}
First of all we are going to prove that
\begin{equation}
\label{For:PropertyProductW}
\begin{array}{l}
\left\langle ({\mathcal W}-I)|j\rangle \big| ({\mathcal W}-I)|k\rangle \right\rangle = 2(1-\cos(\theta_0^0)\dots\cos(\theta_0^{n-1}))\delta_{j,k}+P \\ \\
\left\langle ({\mathcal W}-I)|j\rangle \big| k \right\rangle = (\cos(\theta_0^0)\dots\cos(\theta_0^{n-1})-1)\delta_{j,k} + P,
\end{array}
\end{equation}
for all $0\leq j,k<2^n$, where $P$ is, in each case, a polynomial whose variables are the sine and cosine functions of the angles $\theta_v^u$, $0\leq u<n$ and $0\leq v<3$, such that every monomial includes at least one of the following variables, with exponent one: $\cos(\theta_1^u)$, $\cos(\theta_2^u)$ or $\sin(\theta_2^u)$ with $0\leq u<n$. When calculating the expected value for the variance, the contribution of the polynomial $P$ will be zero.

We are going to prove the previous result by induction in the number $n$ of qubits. Base step: If $n = 1$ Formula~(\ref{For:PropertyW}) easily generalizes to:
\begin{equation}
\label{For:GeneralPropertyW}
\begin{array}{l}
\left\langle (W-I)|j\rangle \big| (W-I)|k\rangle \right\rangle=2(1-\cos(\theta_0^0))\delta_{j,k} \\ \\
\left\langle (W-I)|j\rangle \big| k \right\rangle=(\cos(\theta_0^0)-1)\delta_{j,k} + P,
\end{array}
\end{equation}
for all $0\leq j,k<2$ (See the proof in~\cite{LPFM}). In this case Property~(\ref{For:PropertyProductW}) holds.

Induction step: Let $n\geq 1$ and suppose by induction hypothesis that the Property~(\ref{For:PropertyProductW}) holds for $n$ qubits. We have to prove the Property~(\ref{For:PropertyProductW}) for $n+1$ qubits.

Splitting the $j$ and $k$ indices, as in the proof of Theorem~\ref{Thm:VarianceOneQubit}, we obtain:
$$
\begin{array}{l}
\left\langle ({\mathcal W}-I)|j\rangle \big| ({\mathcal W}-I)|k\rangle \right\rangle = \\ \\
\big\langle ({\mathcal W}^\prime \otimes W - I^\prime \otimes I)(|j^\prime\rangle \otimes |j_n\rangle) \ \big | \ ({\mathcal W}^\prime \otimes W - I^\prime \otimes I)(|k^\prime\rangle \otimes |k_n\rangle) \big\rangle = \\ \\
\big\langle ({\mathcal W}^\prime-I^\prime)|j^\prime\rangle \otimes (W-I)|j_n\rangle + |j^\prime\rangle \otimes (W-I)|j_n\rangle + ({\mathcal W}^\prime-I^\prime)|j^\prime\rangle \otimes |j_n\rangle \ \big| \\ \\
\ \, ({\mathcal W}^\prime-I^\prime)|k^\prime\rangle \otimes (W-I)|k_n\rangle + |k^\prime\rangle \otimes (W-I)|k_n\rangle + ({\mathcal W}^\prime-I^\prime)|k^\prime\rangle \otimes |k_n\rangle \big\rangle,
\end{array}
$$
where $j^\prime$ and $k^\prime$ are $n-$bits, $j_n$ and $k_n$ are bits, ${\mathcal W}^\prime$ is the error operator on $n$ qubits, $I^\prime$ is the $n-$qubit identity operator, $W$ is the error operator on one-qubit and $I$ the one-qubit identity operator.

Applying the distributive property in the previous expression, the sum of nine scalar products is obtained. Everyone decomposes as the product of two scalar products in which sets of disjoint angles appear. The induction hypothesis (Formula~(\ref{For:PropertyProductW})) and the base case (Formula~(\ref{For:GeneralPropertyW})) allow us to conclude the following:
$$
\begin{array}{l}
\left\langle ({\mathcal W}-I)|j\rangle \big| ({\mathcal W}-I)|k\rangle\right\rangle = \\ \\
\ \, (2(1-\cos(\theta_0^0)\dots\cos(\theta_0^{n-1}))\delta_{j^\prime,k^\prime}+P^\prime)2(1-\cos(\theta_0^n))\delta_{j_n,k_n} + \\ \\
\ \, ((\cos(\theta_0^0)\dots\cos(\theta_0^{n-1})-1)\delta_{j^\prime,k^\prime} + P^\prime)2(1-\cos(\theta_0^n))\delta_{j_n,k_n} + \\ \\
\ \, (2(1-\cos(\theta_0^0)\dots\cos(\theta_0^{n-1}))\delta_{j^\prime,k^\prime}+P^\prime)((\cos(\theta_0^n)-1)\delta_{j_n,k_n} + P) + \\ \\
\ \, ((\cos(\theta_0^0)\dots\cos(\theta_0^{n-1})-1)\delta_{k^\prime,j^\prime} + P^{\prime*})2(1-\cos(\theta_0^n))\delta_{j_n,k_n} + \\ \\
\ \, \delta_{j^\prime,k^\prime}2(1-\cos(\theta_0^n))\delta_{j_n,k_n} + \\ \\
\ \, ((\cos(\theta_0^0)\dots\cos(\theta_0^{n-1})-1)\delta_{k^\prime,j^\prime} + P^{\prime*})((\cos(\theta_0^n)-1)\delta_{j_n,k_n} + P) + \\ \\
\ \, 2((1-\cos(\theta_0^0)\dots\cos(\theta_0^{n-1}))\delta_{j^\prime,k^\prime}+P^\prime)((\cos(\theta_0^n)-1)\delta_{k_n,j_n} + P^*) + \\ \\
\ \, ((\cos(\theta_0^0)\dots\cos(\theta_0^{n-1})-1)\delta_{j^\prime,k^\prime} + P^\prime)((\cos(\theta_0^n)-1)\delta_{k_n,j_n} + P^*) + \\ \\
\ \, 2((1-\cos(\theta_0^0)\dots\cos(\theta_0^{n-1}))\delta_{j^\prime,k^\prime}+P^\prime)\delta_{j_n,k_n}. \\ \\
\end{array}
$$

Note that every product in which $P$, $P^\prime$ or both appears is an expression of type ``$P$'' and that the sum of expressions of this type is also of type ``$P$''. Furthermore, in every product in which neither $P$ nor $P^\prime$ is included as a factor, $\delta_{j^\prime,k^\prime}\delta_{j_n,k_n}=\delta_{j,k}$ appears. These facts allow us to obtain the following:
$$
\begin{array}{l}
\left\langle ({\mathcal W}-I)|j\rangle \big| ({\mathcal W}-I)|k\rangle\right\rangle = \\ \\
\qquad -2(1-\cos(\theta_0^0)\dots\cos(\theta_0^{n-1}))(1-\cos(\theta_0^n))\delta_{j,k} +  \\ \\
\qquad 2(1-\cos(\theta_0^n))\delta_{j,k} + 2(1-\cos(\theta_0^0)\dots\cos(\theta_0^{n-1}))\delta_{j,k} + P = \\ \\
\qquad 2(1-\cos(\theta_0^0)\dots\cos(\theta_0^{n-1})\cos(\theta_0^n))\delta_{j,k} + P.
\end{array}
$$

This result proves the induction step for the first of the Formula~(\ref{For:PropertyProductW}) equalities. The second equality is proved in an analogous way, from the expression:
$$
\begin{array}{l}
\left\langle ({\mathcal W}-I)|j\rangle \big| k \right\rangle = \big\langle ({\mathcal W}^\prime \otimes W -I^\prime \otimes I)(|j^\prime\rangle \otimes |j_n\rangle) \ \big|\ |k^\prime\rangle \otimes |k_n\rangle \big\rangle = \\ \\
\ \, \big\langle ({\mathcal W}^\prime -I^\prime)|j^\prime\rangle \otimes (W -I)|j_n\rangle + |j^\prime\rangle \otimes (W -I)|j_n\rangle + ({\mathcal W}^\prime -I^\prime)|j^\prime\rangle \otimes |j_n\rangle \ \big| \\ \\
\ \ \ \, |k^\prime\rangle \otimes |k_n\rangle \big\rangle. \\ \\
\end{array}
$$

This concludes the proof of the induction step and the equalities of Formula~(\ref{For:PropertyProductW}) are proved by the induction principle.

Now, the first equality of Formula~(\ref{For:PropertyProductW}) allow us to conclude the proof as follows:
$$
\begin{array}{lll}
V(\Psi) & = & \displaystyle E\left[\left\| \sum_{k=0}^{2^n-1} \alpha_k({\mathcal W}-I)|k\rangle \right\|^2\right] \\ \\
& = & \displaystyle E\left[ \sum_{j,k=0}^{2^n-1} \alpha_j^* \alpha_k \left\langle ({\mathcal W}-I)|j\rangle \big| ({\mathcal W}-I)|k\rangle \right\rangle \right] \\ \\
& = & \displaystyle \sum_{j,k=0}^{2^n-1} \alpha_j^* \alpha_k \left( E\left[ 2 - 2\cos(\theta_0^0)\dots\cos(\theta_0^{n-1}) \right]\delta_{j,k} + E\left[P\right]\right)  \\ \\
& = & \displaystyle \sum_{k=0}^{2^n-1} |\alpha_k|^2 E\left[ 2 - 2\cos(\theta_0^0)\dots\cos(\theta_0^{n-1}) \right] \\ \\
& = & 2 - 2E\left[\cos(\theta_0^0)\dots\cos(\theta_0^{n-1})\right] \\ \\
& = & \displaystyle 2 - 2\int_{{\mathcal S}_3}f(\theta_0^0)\cos(\theta_0^0)d_{{\mathcal S}_3}\cdots \int_{{\mathcal S}_3}f(\theta_0^{n-1})\cos(\theta_0^{n-1})d_{{\mathcal S}_3} \\ \\
& = & \displaystyle 2 - 2\left(|{\mathcal S}_2|\int_0^\pi f(\theta_0)\cos(\theta_0)\sin^2(\theta_0)d\theta_0\right)^n, \\ \\
\end{array}
$$
where $\sin^2(\theta_0)$ is the volume element corresponding to $\theta_0$ in ${\mathcal S}_3$ and $|{\mathcal S}_2|=4\pi$ is the volume of the sphere of dimension $2$, see Appendix.
\end{proof}

In Theorem~\ref{Thm:VarianceOneQubit} we have calculated the variance of the error over one qubit, and in Theorem~\ref{Thm:VarianceNQubit} the variance of the sum of the errors over each of the $n$ qubits. These errors are independent and identically distributed. In work~\cite{LP} the authors introduce the following formula for the sum of two independent errors $X_1$ and $X_2$:
$$
V(X_1+X_2)=V(X_1)+V(X_2)-\dfrac{V(X_1)V(X_2)}{2}.
$$
They prove it for isotropic errors and conjecture that it is generally true. They also generalize it for $n$ identically distributed independent errors as follows:
\begin{equation}
\label{For:VarianceSum}
V(X_1+\cdots +X_n)=2-2\left(1-\dfrac{\tau}{2}\right)^n,
\end{equation}
where $\tau$ is the variance of each of the errors. It is easy to check that in our case this formula is fulfilled. This result reinforces the conjecture about the variance of the sum of independent errors.

\subsection{An example}
\label{Sub:Example}

In this subsection we introduce a special distribution for the error of a qubit that has been key for~\cite{LP,LPF} and that illustrates well the results of this article.

\begin{defin}
\label{Def:Normal}
The normal error distribution for a qubit is one that has the following density function:
$$
f_n(\theta_0)=\frac{1}{2\pi^2}\frac{(1-\sigma^2)}{(1+\sigma^2-2\sigma\cos(\theta_0))^2},
$$
where the parameter $\sigma$ belongs to the interval $[0,1)$.
\end{defin}

When $\sigma$ approaches $1$ the probability is concentrated at the point $\Phi$, canceling the error. And when $\sigma$ approaches $0$ the distribution tends to be uniform, that is, a distribution in which after the disturbance all the states are equally probable. Figure~\ref{Fig:Distr} shows how the distribution changes depending on the parameter.

\begin{figure}[th]
\label{Fig:Distr}
\begin{center}
        \includegraphics[scale=0.25]{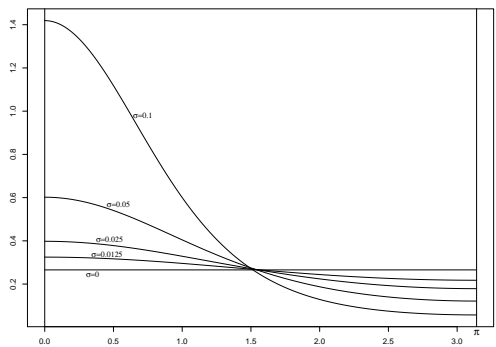}
        \caption{\centerline{Normal density function for a qubit.}}
\end{center}
\end{figure}

The variance of the normal distribution of a one-qubit is very simple: $V(\Psi)=2-2\sigma$ (see Theorem~\ref{Thm:VarianceOneQubit} and Appendix). And the variance of an $n-$qubit independent quantum errors with equal normal distributions in each qubit is, using Formula~(\ref{For:VarianceSum}), $V(\Psi)=2-2\sigma^n$.

\section{Result of applying the correction circuit of the $5-$qubit quantum code to a qubit independent error}

Let's call $\tilde{\Phi}$ the corrected state obtained by applying the correction circuit of code $\mathcal C$ to the disturbed state $\Psi$. Calculating the variance of the corrected state, $V(\tilde\Phi)$, for a general quantum code $\mathcal C$ is an extremely complicated task. Instead we will study the $5-$qubit quantum code~\cite{BDSW,LMPZ}. This code fixes all errors affecting a single qubit and, consequently, it will show key aspects of the power of quantum codes to control decoherence errors. On the other hand, the symmetry of this code allows to carry out the calculations.

The (unitary) error operator for independent qubit errors is, in this case,
$$
{\mathcal W}=W_0\otimes\cdots\otimes W_4,
$$
where the qubit errors, as described in Formula~(\ref{For:W}), are:
$$
W_u=\left(
\begin{array}{cc}
A_u+iB_u & -C_u+iD_u \\
C_u+iD_u & A_u-iB_u
\end{array}
\right),\qquad 0\leq u< 5,
$$
being
$$
\begin{array}{l}
A_u = \cos(\theta_0^u) \\
B_u = \sin(\theta_0^u)\cos(\theta_1^u) \\
C_u = \sin(\theta_0^u)\sin(\theta_1^u)\cos(\theta_2^u) \\
D_u = \sin(\theta_0^u)\sin(\theta_1^u)\sin(\theta_2^u) \\
\end{array},\qquad
\begin{array}{l}
\theta_0^u,\theta_1^u\in[0,\pi] \\
\theta_2^u\in[0,2\pi) \\
\end{array},
\qquad 0\leq u< 5.
$$
Since we are considering qubit independent errors, the density function of the error operator ${\mathcal W}$ will be defined on the space ${\mathcal S}_3^5$. Also, for simplicity we are considering equally distributed errors. Therefore the density function will be of the form
$$
f(\theta_0^0)f(\theta_0^1)f(\theta_0^2)f(\theta_0^3)f(\theta_0^4).
$$

The $5-$qubit quantum code $\mathcal C$ is defined by the following generators:
$$
\begin{array}{rcl}
|0_L\rangle & =   & |00000\rangle - |00011\rangle + |00101\rangle - |00110\rangle + \\
            &     & |01001\rangle + |01010\rangle - |01100\rangle - |01111\rangle - \\
            &     & |10001\rangle + |10010\rangle + |10100\rangle - |10111\rangle - \\
            &     & |11000\rangle - |11011\rangle - |11101\rangle - |11110\rangle \\ \\
|1_L\rangle & = - & |00001\rangle - |00010\rangle - |00100\rangle - |00111\rangle - \\
            &     & |01000\rangle + |01011\rangle + |01101\rangle - |01110\rangle - \\
            &     & |10000\rangle - |10011\rangle + |10101\rangle + |10110\rangle - \\
            &     & |11001\rangle + |11010\rangle - |11100\rangle + |11111\rangle , \\
\end{array}
$$
and by the discrete errors associated with the orthogonal direct sum given in Formula~(\ref{For:OrthogonalSum}):
$$
I,\,X_0,\,X_1,\,X_2,\,X_3,\,X_4,\,Y_0,\,Y_1,\,Y_2,\,Y_3,\,Y_4,\,Z_0,\,Z_1,\,Z_2,\,Z_3\ \text{and}\ Z_4,
$$
where $I$, $X$, $Y$ and $Z$ are the Pauli matrices
$$
I=\left(
\begin{array}{cc}
1 & 0 \\
0 & 1
\end{array}
\right),\quad
X=\left(
\begin{array}{cc}
0 & 1 \\
1 & 0
\end{array}
\right),\quad
Y=\left(
\begin{array}{cc}
0 & -i \\
i & 0
\end{array}
\right)\quad\text{and}\quad
Z=\left(
\begin{array}{cc}
1 & 0 \\
0 & -1
\end{array}
\right).
$$

A quantum state without error $\Phi$ is represented by $\Phi=(w_0+iw_1)|0_L\rangle+(w_2+iw_3)|1_L\rangle$, $w_0^2+w_1^2+w_2^2+w_3^2=1$, and the disturbed quantum state that describes the decoherence of $\Phi$ is the random variable $\Psi={\mathcal W}\Phi$.

To simplify the analysis of the action of the correction circuit of $\mathcal C$, we are going to change the computational basis $B_C$ in which the operator ${\mathcal W}$ is expressed by the basis associated with the code:
$$
B=[|0_L\rangle,\,|1_L\rangle,\,X_0|0_L\rangle,\,X_0|1_L\rangle,\,X_1|0_L\rangle,\,X_1|1_L\rangle,\dots,\,Z_4|0_L\rangle,\,Z_4|1_L\rangle].
$$
In this basis the expression of the error operator is
$$
{\mathcal W}_B = M^{\dag} {\mathcal W} M,
$$
where $M$ is the matrix that changes from basis $B$ to basis $B_C$. Its columns are the vectors of $B$ in coordinates of the basis $B_C$. The quantum state $\Phi$ at basis $B$ has the following coordinates:
$$
\Phi_B =(w_0+iw_1,\,w_2+iw_3,\,0,\,\cdots,\,0)_B.
$$
The disturbed quantum state $\Psi_B$ is obtained by multiplying $\Phi_B$ by the error operator: $\Psi_B={\mathcal W}_B\Phi_B$. Each of the entries in matrix ${\mathcal W}_B$ is a homogeneous polynomial of degree $5$ in the set of variables ${\mathcal V}=\{A_u,\,B_u,\,C_u,\,D_u\ |\ 0\leq u< 5\}$ and degree $1$ in the subsets of variables of ${\mathcal V}$ with subscript $u$ for all $0\leq u< 5$. Furthermore, each of these polynomials includes exactly $32$ monomials. This fact makes it necessary to use a symbolic calculation system to carry out the operations~\cite{LPFM}.

Analogously, the coordinates of $\Psi_B$ are homogeneous polynomials of degree $5$ in the set of variables ${\mathcal V}$, degree $1$ in the subsets of variables of ${\mathcal V}$ with subscript $u$ ($0\leq u< 5$) and degree $1$ in the variables $w_0,\ w_1,\ w_2$ and $w_3$. We are going to represent $\Psi_B$ as follows:
$$
\Psi_B = (\beta_0,\,\beta_1,\,\beta_2,\,\beta_3,\,\cdots,\,\beta_{30},\,\beta_{31})_B,
$$
and we are going to establish the properties of this quantum state (random variable) that will allow us to calculate the variance of the the corrected state $\tilde{\Phi}$. Note that we can consider the coordinates of $\Psi_B$ as functions of $w_0$, $w_1$, $w_2$ and $w_3$.

Note also that, given the definition of the basis $B$, the projection of $\Psi_B$ on the subspace $S_s$, $\Pi_s$, is:
\begin{equation}
\label{For:Proyector}
\Pi_s(\Psi_B)=\beta_{2s}|2s\rangle+\beta_{2s+1}|2s+1\rangle,\qquad 0\leq s< 16.
\end{equation}

\begin{lem}
\label{Lem:StatesOfSk}
The coordinates of the disturbed state $\Psi$ satisfy:
$$
\begin{array}{rcrcrl}
\beta_{2s}   & = & (a_s+ib_s)w_0  & + & (-b_s+ia_s)w_1 & + \\
              &   & (c_s+id_s)w_2  & + & (-d_s+ic_s)w_3 & \ \ \ , \qquad 0\leq s< 16, \\ \\
\beta_1      & = & (-c_0+id_0)w_0 & + & (-d_0-ic_0)w_1 & + \\
              &   & (a_0-ib_0)w_2  & + & (b_0+ia_0)w_3  & \ \ \ , \\ \\
\beta_{2s+1} & = & (c_s-id_s)w_0  & + &  (d_s+ic_s)w_1 & + \\
              &   & (-a_s+ib_s)w_2 & + & (-b_s-ia_s)w_3 & \ \ \ , \qquad 0< s< 16, \\
\end{array}
$$
where
$$
\begin{array}{ll}
a_s=\mathrm{Re}(\beta_{2s}(1,0,0,0)),\quad & b_s=\mathrm{Im}(\beta_{2s}(1,0,0,0)), \\
c_s=\mathrm{Re}(\beta_{2s}(0,0,1,0)),\quad & d_s=\mathrm{Im}(\beta_{2s}(0,0,1,0)),\qquad 0\leq s< 16.
\end{array}
$$
\end{lem}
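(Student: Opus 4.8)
The plan is to first reduce the whole statement to four matrix elements of $\mathcal W$ between logical states, and then to extract the relations among them from a single conjugation symmetry of $\mathcal W$. Since the $E_s$ are unitary and the subspaces $S_s=E_s(\mathcal C)$ are mutually orthogonal by Formula~(\ref{For:OrthogonalSum}), the basis $B$ is orthonormal, so each coordinate is an inner product: $\beta_{2s+j}=\langle E_s j_L\,|\,\mathcal W\Phi\rangle$ for $j\in\{0,1\}$. Writing $\Phi=(w_0+iw_1)|0_L\rangle+(w_2+iw_3)|1_L\rangle$ and using linearity gives $\beta_{2s+j}=(w_0+iw_1)\langle E_s j_L|\mathcal W|0_L\rangle+(w_2+iw_3)\langle E_s j_L|\mathcal W|1_L\rangle$. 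Evaluating at $(1,0,0,0)$ and $(0,0,1,0)$ identifies $\langle E_s 0_L|\mathcal W|0_L\rangle=a_s+ib_s$ and $\langle E_s 0_L|\mathcal W|1_L\rangle=c_s+id_s$, so the displayed formula for $\beta_{2s}$ is merely this expansion. What remains is to express the two $j=1$ elements $\langle E_s 1_L|\mathcal W|0_L\rangle$ and $\langle E_s 1_L|\mathcal W|1_L\rangle$ in terms of $a_s,b_s,c_s,d_s$.

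The engine of this step is the identity $\overline{\mathcal W}=\bar Y\,\mathcal W\,\bar Y$, where $\bar Y=Y_0Y_1Y_2Y_3Y_4=Y^{\otimes 5}$ and the overline denotes entrywise complex conjugation. Writing the one-qubit error as $W_u=A_uI+iD_uX-iC_uY+iB_uZ$, a direct $2\times2$ computation yields $\overline{W_u}=Y W_u Y$, and tensoring the five factors gives the claimed identity. Since the code generators have real ($\pm1$) coefficients, $|0_L\rangle$ and $|1_L\rangle$ are real vectors, and every $E_s$ is a real matrix except the $Y_u$, for which $\overline{Y_u}=-Y_u$. Hence, for $k\in\{0,1\}$, one gets $\overline{\langle E_s 0_L|\mathcal W|k_L\rangle}=\eta_s\,\langle E_s 0_L|\bar Y\,\mathcal W\,\bar Y|k_L\rangle$, with $\eta_s=-1$ when $E_s=Y_u$ and $\eta_s=+1$ otherwise.

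I would then push $\bar Y$ to both ends. The operator $Y^{\otimes 5}=iX^{\otimes5}Z^{\otimes5}$ is the logical $Y$, so it preserves $\mathcal C$ and acts by $\bar Y|0_L\rangle=\lambda|1_L\rangle$, $\bar Y|1_L\rangle=\bar\lambda|0_L\rangle$ with $\lambda=\pm i$. Combined with the Pauli commutation $E_s\bar Y=\zeta_s\,\bar Y E_s$ (where $\zeta_s=-1$ for $E_s\in\{X_u,Z_u\}$ and $\zeta_s=+1$ for $E_s\in\{I,Y_u\}$) and the facts $|\lambda|^2=1$ and $\bar\lambda^2=-1$, this produces $\langle E_s 1_L|\mathcal W|1_L\rangle=\eta_s\zeta_s\,\overline{\langle E_s 0_L|\mathcal W|0_L\rangle}$ and $\langle E_s 1_L|\mathcal W|0_L\rangle=-\eta_s\zeta_s\,\overline{\langle E_s 0_L|\mathcal W|1_L\rangle}$. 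The product $\eta_s\zeta_s$ equals $+1$ exactly for $s=0$ (the identity) and $-1$ for all $15$ single-qubit errors $s>0$; substituting into $\beta_{2s+1}=\langle E_s1_L|\mathcal W|0_L\rangle(w_0+iw_1)+\langle E_s1_L|\mathcal W|1_L\rangle(w_2+iw_3)$ and expanding reproduces exactly the two sign patterns of the statement.

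The genuine obstacle is the one code-specific input: that $Y^{\otimes5}$ restricts to logical $Y$ with $\bar Y|0_L\rangle=\pm i|1_L\rangle$, equivalently that the code states are real and that $\lambda$ is pure imaginary rather than real. It is precisely this that separates the $s=0$ sign from the $s>0$ sign, through the factor $\bar\lambda^2=-1$, and establishing it requires a careful but finite inspection of the explicit generators, of the mechanical kind deferred to~\cite{LPFM}. Reassuringly, the final relations are insensitive to the global phase $\lambda$ and to the analogous sign in $\bar X|0_L\rangle=\pm|1_L\rangle$, since these phases enter the relevant products quadratically and cancel; everything else (the $2\times2$ conjugation identity, the Pauli commutation signs, the orthonormality of $B$, and linearity) is entirely generic.
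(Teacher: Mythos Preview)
Your argument is correct, and it takes a genuinely different route from the paper. The paper's own proof is entirely deferred to the companion symbolic-computation files~\cite{LPFM}: it is a brute-force verification of the $32$ coordinate identities by expanding $\mathcal W_B\Phi_B$ in a computer algebra system. You instead extract the relations from the single conjugation symmetry $\overline{W}=YWY$ of the one-qubit error, tensored to $\overline{\mathcal W}=Y^{\otimes5}\mathcal W\,Y^{\otimes5}$, together with the fact that $Y^{\otimes5}$ acts as logical~$Y$ on~$\mathcal C$. This explains structurally why $\beta_{2s+1}$ is determined by $\beta_{2s}$ and why the $s=0$ sign pattern differs from the $s>0$ one: it is exactly the sign $\eta_s\zeta_s$, which is $+1$ for the identity and $-1$ for every single-qubit Pauli. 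Your approach trades a large mechanical computation for a short symmetry argument plus one code-specific input. Incidentally, that input is even smaller than you suggest: once $|0_L\rangle,|1_L\rangle$ are real and $Y^{\otimes5}$ preserves~$\mathcal C$, the phase $\lambda=\pm i$ is automatic, since $\langle0_L|Y^{\otimes5}|0_L\rangle$ is simultaneously real (Hermitian operator) and purely imaginary (imaginary matrix between real vectors), hence zero, forcing $Y^{\otimes5}|0_L\rangle\in\mathrm{span}\{|1_L\rangle\}$ with a purely imaginary coefficient. The paper's mechanical approach is self-contained and requires no insight; yours is more illuminating and would transfer to any real stabilizer code admitting a transversal logical~$Y$.
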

\begin{proof}
See the proof in~\cite{LPFM}.
\end{proof}

Lemma~\ref{Lem:StatesOfSk} shows that the projection probabilities on the subspaces $S_s$, $0\leq s< 16$, do not depend on the initial state. This property is related to the property indicated in Formula~(\ref{For:PropertyW}) for the error in a qubit.

\begin{lem}
\label{Lem:ProbabilityOfPk}
The projections of the state $\Psi$ on the subspaces $S_s$, $0\leq s< 16$, satisfy:
$$
\|\Pi_s(\Psi)\|^2=a_s^2+b_s^2+c_s^2+d_s^2,
$$
where $a_s=\mathrm{Re}(\beta_{2s}(1,0,0,0))$, $b_s=\mathrm{Im}(\beta_{2s}(1,0,0,0))$, $c_s=\mathrm{Re}(\beta_{2s}(0,0,1,0))$ and $d_s=\mathrm{Im}(\beta_{2s}(0,0,1,0))$.
\end{lem}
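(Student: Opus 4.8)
The plan is to reduce the whole statement to the orthonormality of the basis $B$ together with the explicit coordinate formulas of Lemma~\ref{Lem:StatesOfSk}. First I would invoke Formula~(\ref{For:Proyector}): since $|2s\rangle$ and $|2s+1\rangle$ are two distinct vectors of the orthonormal basis $B$, the projection $\Pi_s(\Psi_B)=\beta_{2s}|2s\rangle+\beta_{2s+1}|2s+1\rangle$ immediately gives $\|\Pi_s(\Psi)\|^2=|\beta_{2s}|^2+|\beta_{2s+1}|^2$. Thus the lemma reduces to evaluating this sum of two squared moduli.

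Next I would rewrite the coordinates supplied by Lemma~\ref{Lem:StatesOfSk} in compact complex form. Setting $p_s=a_s+ib_s$, $q_s=c_s+id_s$, $z_1=w_0+iw_1$ and $z_2=w_2+iw_3$, and using the elementary identities $-b_s+ia_s=ip_s$, $-d_s+ic_s=iq_s$, $d_s+ic_s=i\overline{q_s}$ and $-b_s-ia_s=-i\overline{p_s}$, one checks directly that
$$
\beta_{2s}=p_s z_1+q_s z_2,\qquad \beta_{2s+1}=\pm\left(\overline{q_s}\,z_1-\overline{p_s}\,z_2\right),
$$
where the sign is $-$ for $s=0$ and $+$ for $0<s<16$; since only $|\beta_{2s+1}|^2$ enters, the sign is immaterial. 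The key structural observation is that the $2\times2$ matrix carrying $(z_1,z_2)$ to $(\beta_{2s},\beta_{2s+1})$ has rows $(p_s,q_s)$ and $(\overline{q_s},-\overline{p_s})$ that are orthogonal (their Hermitian inner product is $p_s q_s-q_s p_s=0$) and each of squared norm $|p_s|^2+|q_s|^2=a_s^2+b_s^2+c_s^2+d_s^2$.

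From this I would conclude by a short expansion: writing out $|\beta_{2s}|^2+|\beta_{2s+1}|^2$, the cross terms $+2\,\mathrm{Re}(p_s\overline{q_s}z_1\overline{z_2})$ and $-2\,\mathrm{Re}(p_s\overline{q_s}z_1\overline{z_2})$ cancel exactly — this cancellation is precisely the row-orthogonality noted above — leaving $(|p_s|^2+|q_s|^2)(|z_1|^2+|z_2|^2)$. Finally I would use the normalization of the initial state, $|z_1|^2+|z_2|^2=w_0^2+w_1^2+w_2^2+w_3^2=1$, to obtain $\|\Pi_s(\Psi)\|^2=a_s^2+b_s^2+c_s^2+d_s^2$, as claimed. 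There is no genuine obstacle beyond the bookkeeping of these complex identities; the only point requiring care is verifying the cancellation of the cross terms, which is exactly the manifestation of the fact that $\mathcal{W}_B$ acts on each syndrome block $S_s$ as a scalar multiple of a unitary map — equivalently, that the projection probability is independent of the initial state $(w_0,w_1,w_2,w_3)$, in agreement with the remark following Lemma~\ref{Lem:StatesOfSk}.
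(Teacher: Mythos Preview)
Your proof is correct and follows essentially the same route as the paper: reduce $\|\Pi_s(\Psi)\|^2$ to $|\beta_{2s}|^2+|\beta_{2s+1}|^2$ via Formula~(\ref{For:Proyector}), plug in the coordinate expressions from Lemma~\ref{Lem:StatesOfSk}, obtain the factorization $(a_s^2+b_s^2+c_s^2+d_s^2)(w_0^2+w_1^2+w_2^2+w_3^2)$, and use the normalization of $\Phi$. Your complex packaging $p_s=a_s+ib_s$, $q_s=c_s+id_s$, $z_j=w_{2j-2}+iw_{2j-1}$ is simply a cleaner way to carry out the same verification that the paper delegates to~\cite{LPFM}.
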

\begin{proof}
It is enough to check, using Lemma~\ref{Lem:StatesOfSk}, the following equality (See the proof in~\cite{LPFM}):
$$
\|\Pi_s(\Psi)\|^2=|\beta_{2s}|^2+|\beta_{2s+1}|^2=(a_s^2+b_s^2+c_s^2+d_s^2)(w_0^2+w_1^2+w_2^2+w_3^2).
$$
The result is obtained taking into account that $w_0^2+w_1^2+w_2^2+w_3^2=1$.
\end{proof}

In order to carry out the study of the variance of $\tilde\Phi$, we need more information about the expressions $a_s$, $b_s$ $c_s$ and $d_s$, $0\leq s< 16$, introduced in Lemma~\ref{Lem:StatesOfSk} and used in the proof of Lemma~\ref{Lem:ProbabilityOfPk}.

\begin{lem}
\label{Lem:ShapeOfABCD}
The polinomials
$$
\begin{array}{ll}
a_s=\mathrm{Re}(\beta_{2s}(1,0,0,0)),\quad & b_s=\mathrm{Im}(\beta_{2s}(1,0,0,0)), \\
c_s=\mathrm{Re}(\beta_{2s}(0,0,1,0)),\quad & d_s=\mathrm{Im}(\beta_{2s}(0,0,1,0)),
\end{array}
$$
for all $0\leq s< 16$, satisfy the following properties:
\begin{itemize}
\item[1)] They are homogeneous polynomials of degree $5$ on the set of variables $\mathcal V$ and degree $1$ in the subsets of variables of ${\mathcal V}$ with subscript $u$, for all $0\leq u< 5$, and they are made up of $16$ monomials.
\item[2)] The monomials of $a_s$, $b_s$ $c_s$ and $d_s$ for all $0\leq s< 16$ are different two by two.
\item[3)] For all $0\leq s< 16$, in each of $a_s$, $b_s$ $c_s$ or $d_s$ every pair of different monomials share exactly one variable.
\end{itemize}
\end{lem}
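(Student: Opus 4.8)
The plan is to avoid expanding the $32$-monomial coordinates directly and instead to read off $a_s,b_s,c_s,d_s$ from the Pauli structure of ${\mathcal W}$ together with the stabilizer description of the code. First I would record, from the change of basis and Lemma~\ref{Lem:StatesOfSk}, the identities $a_s+ib_s=({\mathcal W}_B)_{2s,0}=\langle E_s\,0_L\,|\,{\mathcal W}\,|\,0_L\rangle$ and $c_s+id_s=({\mathcal W}_B)_{2s,1}=\langle E_s\,0_L\,|\,{\mathcal W}\,|\,1_L\rangle$, where $E_s$ is the $s$-th discrete error. Then I would expand each one-qubit factor in the Pauli basis, $W_u=A_uI+i\,(D_uX+B_uZ-C_uY)$, so that the tensor product becomes ${\mathcal W}=\sum_P\lambda_P\,P$ over all $4^5$ Pauli strings $P=P_0\otimes\cdots\otimes P_4$, with $\lambda_P=(-1)^{\#Y(P)}\,i^{\,w(P)}\,m_P$, where $w(P)$ counts the non-identity factors and $m_P=\prod_uV_u$ is the monomial attached to $P$ by the slot-wise rule $I\mapsto A,\ X\mapsto D,\ Z\mapsto B,\ Y\mapsto C$. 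This rule is a bijection between the $4^5$ Pauli strings and the $4^5$ monomials in ${\mathcal V}$, and it is the backbone of the argument.

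To obtain the degree statement and the count of $16$ monomials, I would invoke the stabilizer formalism. Writing $\mathcal S$ for the stabilizer group (order $16$) and $\bar Z,\bar X$ for the logical operators, the matrix element $\langle 0_L|P|0_L\rangle$ is nonzero precisely when $P$ lies in the order-$32$ group $\langle\mathcal S,\bar Z\rangle$, in which case it is $\pm1$; hence $\langle E_s0_L|{\mathcal W}|0_L\rangle$ is a sum over the single coset $E_s\langle\mathcal S,\bar Z\rangle$, giving exactly $32$ monomials, and likewise $\langle E_s0_L|{\mathcal W}|1_L\rangle$ runs over $E_s\bar X\langle\mathcal S,\bar Z\rangle$. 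The real and imaginary parts then partition these $32$ monomials according to the combined sign coming from $i^{w(P)}$ and from the possibly imaginary phase of the off-diagonal matrix element; showing that this partition is always $16/16$ is the content of the count. Homogeneity of degree $5$, and degree $1$ in each subscript block, is inherited verbatim from the tensor-product form of ${\mathcal W}$.

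For the second property I would argue purely by coset counting. The $32$ sets $E_s\langle\mathcal S,\bar Z\rangle$ and $E_s\bar X\langle\mathcal S,\bar Z\rangle$, for $0\le s<16$, are the $32$ cosets of the order-$32$ group $\langle\mathcal S,\bar Z\rangle$ inside the Pauli group of order $4^5=1024$; since the $5$-qubit code is perfect and non-degenerate the $16$ discrete errors realise distinct syndromes, so these cosets are pairwise disjoint and together exhaust the whole group. Transported by the bijection $P\mapsto m_P$, this says exactly that the $16\cdot4\cdot16=1024$ monomials occurring in all the $a_s,b_s,c_s,d_s$ are pairwise distinct and, in fact, realise every monomial of ${\mathcal V}$ once.

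The third property is where I expect the real work. Two monomials $m_P,m_{P'}$ belonging to the same polynomial share exactly one variable iff $P$ and $P'$ agree in exactly one slot, i.e. iff the product $PP'$ has weight $4$; and because $P,P'$ lie in a common coset, $PP'\in\langle\mathcal S,\bar Z\rangle$. Thus the claim reduces to showing that, within each $16$-element block cut out by the real/imaginary splitting, all pairwise products have weight exactly $4$. Here the exceptional symmetry of the code is decisive: its stabilizer has weight enumerator $1+15z^4$, so every nontrivial element of $\mathcal S$ has weight $4$, while the logical cosets $\bar X\mathcal S,\bar Y\mathcal S,\bar Z\mathcal S$ contain only odd-weight operators (weights $3$ and $5$). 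Combining these enumerators with the parity/phase bookkeeping of the first property should force every relevant pairwise product onto weight $4$. The main obstacle is precisely this combination: tracking the stabilizer signs and the off-diagonal phases finely enough to pin down which $16$ Pauli strings land in each block and to verify the uniform weight-$4$ incidence for all $16$ values of $s$ simultaneously. If a clean uniform argument proves elusive, I would close the finitely many residual cases by the symbolic verification of~\cite{LPFM}.
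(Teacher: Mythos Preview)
The paper's own proof of this lemma is a bare reference to the companion symbolic-computation file~\cite{LPFM}; there is no conceptual argument in the paper. Your route is therefore genuinely different. The Pauli expansion $W_u=A_uI+i(D_uX+B_uZ-C_uY)$ and the resulting bijection between Pauli strings and monomials is exactly the right lens: homogeneity and multilinearity in each subscript block are then immediate, and your coset argument for property~2) is clean and correct --- the $32$ cosets of $\langle\mathcal S,\bar Z\rangle$ really do partition the $1024$ Pauli strings, and non-degeneracy of the $5$-qubit code ensures the $E_s$ and $E_s\bar X$ hit distinct cosets. Your reduction of property~3) to ``pairwise products have weight $4$'', together with the stabilizer enumerator $1+15z^4$ and the fact that the three logical cosets of $\mathcal S$ have only odd weights ($30z^3+18z^5$), is also right: it shows that property~3) is \emph{equivalent} to each of the four $16$-element blocks being a single $\mathcal S$-coset.

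That equivalence is where the actual content sits, and it is not quite the benign bookkeeping you suggest. The real/imaginary split of $\langle E_s0_L|\mathcal W|0_L\rangle$ is governed not only by $i^{w(P)}$ but also by the phase $\eta\in\{\pm1,\pm i\}$ in $E_sP=\eta\,Q$, which is imaginary precisely when $E_s$ and $P$ anticommute at the supporting qubit of $E_s$; this already shows that the split is \emph{not} by weight parity alone (for instance, $a_1$ contains monomials corresponding to Pauli strings of weights $2$, $4$ and $5$). So what you must prove is that the composite parity $w(P)+[\eta\ \text{imaginary}]$ is constant on each $\mathcal S$-coset inside $E_s\langle\mathcal S,\bar Z\rangle$, and flips between the two cosets. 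This is a concrete statement about the $5$-qubit stabilizer (it uses more than the weight enumerator), and establishing it uniformly in $s$ is the missing step; your fallback to~\cite{LPFM} is exactly the paper's proof. If you want a uniform argument, the cleanest route is probably to fix a qubit index $j$, observe that for $E_s$ supported at $j$ the relevant parity can be rewritten as the $\mathbb F_2$-symplectic pairing of $P$ with a fixed weight-one Pauli at $j$ plus the weight parity, and then check on generators that this linear functional separates $\mathcal S$ from $\bar Z\mathcal S$ --- but this still requires a short case analysis over $E_s\in\{X_j,Y_j,Z_j\}$.
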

\begin{proof}
See the proof in~\cite{LPFM}.
\end{proof}

When applying the correction circuit of code $\mathcal C$ to the disturbed state $\Psi$, the corrected state $\tilde{\Phi}$ is obtained in different ways, depending on the measured syndrome $s$, $0\leq s< 16$. We denote by $P_s$ the probability that the syndrome is $s$, $0\leq s< 16$. Thus $P_0$ is the probability that no error is detected, that is, that the quantum measurement projects the disturbed state $\Psi$ on the subspace $S_0={\mathcal C}$. On the other hand, $P_s$ is the probability that the $E_s$ error will be detected for $0< s< 16$.

The probability that the correction circuit of ${\mathcal C}$ measures a syndrome $s$ is, using Lemma~\ref{Lem:ProbabilityOfPk}:
$$
E[P_s]=E\left[\Pi_s(\Psi)\right]=E\left[a_s^2+b_s^2+c_s^2+d_s^2\right],\qquad 0\leq s< 16,
$$
where $a_s=\mathrm{Re}(\beta_{2s}(1,0,0,0))$, $b_s=\mathrm{Im}(\beta_{2s}(1,0,0,0))$, $c_s=\mathrm{Re}(\beta_{2s}(0,0,1,0))$ and $d_s=\mathrm{Im}(\beta_{2s}(0,0,1,0))$.

To calculate these expected values we need some preliminary considerations, that can be easily checked with a symbolic calculation system~\cite{LPFM}.

The first is related to the expected values of the combinations of two variables of ${\mathcal V}$:
\begin{equation}
\label{For:ValorEsperadoABCD}
\begin{array}{l}
\displaystyle E[A_u^2]=E[\cos^2(\theta_0)]=4\pi\bar E[\cos^2(\theta_0)\sin^2(\theta_0)] \\ \\
\displaystyle E[B_u^2]=E[\sin^2(\theta_0)\cos^2(\theta_1)]=\dfrac{4\pi}{3}\bar E[\sin^4(\theta_0)] \\ \\
\displaystyle E[C_u^2]=E[\sin^2(\theta_0)\sin^2(\theta_1)\cos^2(\theta_2)]=\dfrac{4\pi}{3}\bar E[\sin^4(\theta_0)] \\ \\
\displaystyle E[D_u^2]=E[\sin^2(\theta_0)\sin^2(\theta_1)\sin^2(\theta_2)]=\dfrac{4\pi}{3}\bar E[\sin^4(\theta_0)] \\ \\
\displaystyle E[A_uB_u]=E[A_uC_u]=E[A_uD_u]=0 \\ \\
\displaystyle E[B_uC_u]=E[B_uD_u]=E[C_uD_u]=0 \\ \\
\end{array}\ ,
\qquad 0\leq s< 5.
\end{equation}

The second has to do with the expected value of the polynomials $a_s^2$, $b_s^2$, $c_s^2$ and $d_s^2$, $0\leq s< 16$, that appear in Lemma~\ref{Lem:ProbabilityOfPk}.
Each of these expected values is equal to the sum of the expected values of the squared monomials of the corresponding polynomial. This is because the expected value of the product of different monomials is equal to zero. See Lemma~\ref{Lem:ShapeOfABCD}, item 3), and Formula~(\ref{For:ValorEsperadoABCD}), fifth and sixth lines.

The third is related to the number of monomials, classified by type, that appear in the polynomials $a_s^2$, $b_s^2$, $c_s^2$ and $d_s^2$ that contribute to the probability $P_s$, $0\leq s< 16$ (See the proof in~\cite{LPFM}):
\begin{equation}
\label{For:NumeroMonomiosPk}
\begin{array}{|c|c|c|c|c|c|c|}
\noalign{\hrule}
 & \ 5\ \text{$A$'s}\ & \ 4\ \text{$A$'s}\ & \ 3\ \text{$A$'s}\ & \ 2\ \text{$A$'s}\ & \ 1\ \text{$A$}\ & \ 0\ \text{$A$'s}\ \\
\noalign{\hrule}
P_0 & 1 & 0 & 0 & 30 & 15 & 18 \\
\noalign{\hrule}
\ P_s,\ 0< s< 16\ & 0 & 1 & 6 & 16 & 26 & 15 \\
\noalign{\hrule}
\end{array}
\end{equation}

The above considerations allow us to calculate the probabilities $E[P_s]$ for all $0\leq s< 16$.

\begin{thm}
\label{Thm:ProbabilidadPk}
The probability that the correction circuit of ${\mathcal C}$ measures a syndrome $s$, $0\leq s< 16$, is:
$$
\begin{array}{rccl}
E[P_0] & = & \displaystyle(4\pi)^5
                  & \displaystyle \left({\bar E}[\cos^2(\theta_0)\sin^2(\theta_0)]^5 + \dfrac{30}{3^3}{\bar E}[\cos^2(\theta_0)\sin^2(\theta_0)]^2{\bar E}[\sin^4(\theta_0)]^3 + \right. \\ \\
    &   &         & \ \ \displaystyle \left. \dfrac{15}{3^4}{\bar E}[\cos^2(\theta_0)\sin^2(\theta_0)]{\bar E}[\sin^4(\theta_0)]^4 + \dfrac{18}{3^5}{\bar E}[\sin^4(\theta_0)]^5\right) \\ \\
E[P_s] & = & \displaystyle(4\pi)^5
                  & \displaystyle \left( \dfrac{1}{3}{\bar E}[\cos^2(\theta_0)\sin^2(\theta_0)]^4{\bar E}[\sin^4(\theta_0)] + \right. \\ \\
    &   &         & \ \ \displaystyle \dfrac{6}{3^2}{\bar E}[\cos^2(\theta_0)\sin^2(\theta_0)]^3{\bar E}[\sin^4(\theta_0)]^2 + \\ \\
    &   &         & \ \ \displaystyle \dfrac{16}{3^3}{\bar E}[\cos^2(\theta_0)\sin^2(\theta_0)]^2{\bar E}[\sin^4(\theta_0)]^3 + \\ \\
    &   &         & \ \ \displaystyle \dfrac{26}{3^4}{\bar E}[\cos^2(\theta_0)\sin^2(\theta_0)]{\bar E}[\sin^4(\theta_0)]^4 + \\ \\
    &   &         & \ \ \displaystyle \left. \dfrac{15}{3^5}{\bar E}[\sin^4(\theta_0)]^5\right),\qquad 0< s< 16, \\
\end{array}
$$
where $\displaystyle {\bar E}\left[g(\theta_0)\right]=\int_0^\pi f(\theta_0)g(\theta_0)d\theta_0$ for every function $g$ defined in $[0,\pi]$.
\label{Thm:ProbabilityPs}
\end{thm}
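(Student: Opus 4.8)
The plan is to evaluate $E[P_s]=E\left[a_s^2+b_s^2+c_s^2+d_s^2\right]$, the expression furnished by Lemma~\ref{Lem:ProbabilityOfPk}, by reducing it to a sum of expectations of squared monomials and then assembling these contributions with the help of the moment identities in Formula~(\ref{For:ValorEsperadoABCD}) and the monomial census in Formula~(\ref{For:NumeroMonomiosPk}). First I would expand each square $a_s^2$, $b_s^2$, $c_s^2$, $d_s^2$ into its diagonal (squared-monomial) part and its off-diagonal (cross) part. By item~3) of Lemma~\ref{Lem:ShapeOfABCD}, any two distinct monomials of one of these polynomials agree on exactly one qubit and differ on the other four; since the error is qubit independent, the expectation of such a cross term factorizes over the five qubits, and on any qubit where the two monomials carry different variables the corresponding factor is one of $E[A_uB_u]$, $E[A_uC_u]$, \dots, $E[C_uD_u]$, all of which vanish by the fifth and sixth lines of Formula~(\ref{For:ValorEsperadoABCD}). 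Hence every cross term has zero expectation and $E[a_s^2]=\sum_{m}E[m^2]$, the sum ranging over the $16$ monomials $m$ of $a_s$, and likewise for $b_s$, $c_s$, $d_s$.

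Next I would compute $E[m^2]$ for a single monomial. By item~1) of Lemma~\ref{Lem:ShapeOfABCD} each $m$ is a product $m=\prod_{u=0}^{4}X_u$ with exactly one factor $X_u\in\{A_u,B_u,C_u,D_u\}$ per qubit, so independence of the qubits gives $E[m^2]=\prod_{u=0}^{4}E[X_u^2]$. Writing $p=\bar E[\cos^2(\theta_0)\sin^2(\theta_0)]$ and $q=\bar E[\sin^4(\theta_0)]$, the first four lines of Formula~(\ref{For:ValorEsperadoABCD}) say that an $A$-factor contributes $4\pi\,p$ while a $B$-, $C$- or $D$-factor contributes $\tfrac{4\pi}{3}q$. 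Consequently a monomial carrying exactly $k$ factors of type $A$ satisfies $E[m^2]=(4\pi)^5\,3^{-(5-k)}\,p^{k}q^{5-k}$.

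Finally I would group the $64=4\times16$ monomials that make up $a_s^2+b_s^2+c_s^2+d_s^2$ according to their number $k$ of $A$-factors and sum their contributions, reading the multiplicities off Formula~(\ref{For:NumeroMonomiosPk}). For $s=0$ the row $(k=5,4,3,2,1,0)\mapsto(1,0,0,30,15,18)$ yields $E[P_0]=(4\pi)^5\bigl(p^5+\tfrac{30}{3^3}p^2q^3+\tfrac{15}{3^4}pq^4+\tfrac{18}{3^5}q^5\bigr)$, and for $0<s<16$ the row $(0,1,6,16,26,15)$ yields the stated five-term expression for $E[P_s]$; in both cases the multiplicities sum to $64$, which is a useful consistency check against Lemma~\ref{Lem:ShapeOfABCD}, item~1).

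The genuinely hard part is not this assembly, which is routine bookkeeping, but the two structural inputs it rests on: the shape statement of Lemma~\ref{Lem:ShapeOfABCD} (pairwise-distinct $16$-monomial polynomials in which any two monomials of a given $a_s,b_s,c_s,d_s$ share exactly one variable) and the classification counts of Formula~(\ref{For:NumeroMonomiosPk}). These depend on the explicit generators of the $5$-qubit code and on the change to the basis $B$, and are verified by symbolic computation in~\cite{LPFM}; granting them, the orthogonality of the cross terms and the factorization of $E[m^2]$ deliver the theorem immediately.
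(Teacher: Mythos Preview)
Your proposal is correct and follows exactly the approach the paper uses: the paper's one-line proof (``It follows from the data in Table~(\ref{For:NumeroMonomiosPk}), applying the results of Formula~(\ref{For:ValorEsperadoABCD})'') is preceded by the same preliminary observations you spell out, namely that cross terms vanish by Lemma~\ref{Lem:ShapeOfABCD}, item~3), together with the fifth and sixth lines of Formula~(\ref{For:ValorEsperadoABCD}), so that $E[P_s]$ reduces to a sum of expectations of squared monomials weighted by their $A$-count. Your write-up is simply a more detailed rendering of the same argument, with the helpful sanity check that the multiplicities in each row of~(\ref{For:NumeroMonomiosPk}) sum to $64$.
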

\begin{proof}
It follows from the data in Table~(\ref{For:NumeroMonomiosPk}), applying the results of Formula~(\ref{For:ValorEsperadoABCD}).
\end{proof}

In order to calculate the variance of the corrected state ${\tilde \Phi}$ we introduce the corrected states ${\tilde \Phi}_s$, $0\leq s< 16$, that are obtained when the measured syndrome is $s$. Note that $\Psi$, ${\tilde \Phi}$ and ${\tilde \Phi}_s$ ($0\leq s< 16$) are random variables, all of them defined in the space $S_3^5$. Next we are going to prove a really surprising result. If the correction circuit for code ${\mathcal C}$ detects an error (measures a syndrome $0< s< 16$), then the probability distribution of $\tilde\Phi$ has central symmetry in subspace ${\mathcal C}$. This means that the probability of obtaining a $\tilde\Phi$ state is the same as the probability of obtaining its opposite $-\tilde\Phi$. This property implies that the variance of $\tilde\Phi$ will be the same as if the distribution of $\tilde\Phi$ were uniform (see~\cite{LPF}, Theorem 3). Therefore this result is a negative indicator of the correction power of the ${\mathcal C}$ code.

\begin{thm}
\label{Thm:Isotropy}
Random variables ${\tilde \Phi}_s$, $0< s< 16$, have central symmetric probability distributions in the code subspace ${\mathcal C}$.
\end{thm}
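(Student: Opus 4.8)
The plan is to exhibit, for each fixed syndrome $s$ with $0<s<16$, a measure-preserving involution of the error-parameter space $\mathcal S_3^5$ that leaves the error distribution invariant while sending the corrected state $\tilde\Phi_s$ to its antipode $-\tilde\Phi_s$; central symmetry of the law of $\tilde\Phi_s$ then follows at once. First I would record the explicit form of $\tilde\Phi_s$. By Formula~(\ref{For:Proyector}) the projection of $\Psi$ onto $S_s$ is $\Pi_s(\Psi)=\beta_{2s}|2s\rangle+\beta_{2s+1}|2s+1\rangle$, and since the basis $B$ identifies $|2s\rangle$ with $E_s|0_L\rangle$ and $|2s+1\rangle$ with $E_s|1_L\rangle$, applying $E_s^{-1}$ and normalizing gives $\tilde\Phi_s\propto\beta_{2s}|0_L\rangle+\beta_{2s+1}|1_L\rangle$. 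Because the normalizing factor $\sqrt{|\beta_{2s}|^2+|\beta_{2s+1}|^2}$ is insensitive to a global sign, negating the pair $(\beta_{2s},\beta_{2s+1})$ is exactly the antipodal map $\tilde\Phi_s\mapsto-\tilde\Phi_s$ on $\mathcal C\cong\R^4$.

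The involution I would use is conjugation of the error by a stabilizer. Since $E_s$ is a detectable weight-one Pauli for $0<s<16$, it has nonzero syndrome, so in the stabilizer formalism of Section~2 there is a stabilizer generator $V=\sigma_0\otimes\cdots\otimes\sigma_4$ that anticommutes with $E_s$. Define $\tau_V$ on $\mathcal S_3^5$ by replacing each $W_u$ with $\sigma_uW_u\sigma_u$. Writing the one-qubit error as $W_u=A_uI+i(D_uX-C_uY+B_uZ)$, conjugation by a single-qubit Pauli merely reverses the signs of two of the three coordinates $B_u,C_u,D_u$ while fixing $A_u=\cos(\theta_0^u)$; the elementary sign reversals are realized by the reflections $\theta_1^u\mapsto\pi-\theta_1^u$ (flips $B_u$), $\theta_2^u\mapsto\pi-\theta_2^u$ (flips $C_u$) and $\theta_2^u\mapsto2\pi-\theta_2^u$ (flips $D_u$), and the required pair of sign reversals by their composition. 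Each reflection preserves the volume element of $\mathcal S_3$ and fixes every $\theta_0^u$, so the density $f(\theta_0^0)\cdots f(\theta_0^4)$ is unchanged; hence $\tau_V$ is a measure-preserving involution of the error distribution.

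The action on the corrected state is then transparent. As $V$ is a stabilizer it fixes every code state, so $V\Phi=\Phi$ and the transformed disturbed state is $\mathcal W^{\tau_V}\Phi=V\mathcal W V\Phi=V\mathcal W\Phi=V\Psi$. On the other hand, from $VE_s=-E_sV$ together with $V|k_L\rangle=|k_L\rangle$ one obtains $V(E_s|k_L\rangle)=-E_s|k_L\rangle$ for $k=0,1$, i.e.\ $V$ acts as $-\mathrm{Id}$ on all of $S_s$. Therefore $\Pi_s(V\Psi)=V\,\Pi_s(\Psi)=-\Pi_s(\Psi)$, so $\tau_V$ negates the pair $(\beta_{2s},\beta_{2s+1})$ and hence sends $\tilde\Phi_s$ to $-\tilde\Phi_s$. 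Finally the syndrome weight is preserved, since $\|\Pi_s(V\Psi)\|^2=\|\Pi_s(\Psi)\|^2$, in harmony with Lemma~\ref{Lem:ProbabilityOfPk} where this norm is already seen to be state-independent. Thus $\tau_V$ is a measure-preserving symmetry of the conditional law of $\tilde\Phi_s$ that coincides with the antipodal map, which is exactly central symmetry.

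The one point that needs real care—the step I expect to be the main obstacle—is the identification of operator-level Pauli conjugation with a bona fide measure-preserving reparametrization of $\mathcal S_3^5$. Concretely, I must verify that for each single-qubit Pauli $\sigma_u$ the map $W_u\mapsto\sigma_uW_u\sigma_u$ really is the composition of the coordinate reflections above, and that this composition preserves both the spherical volume element and the $\theta_0^u$-marginal density; this is what turns ``conjugate the error by a stabilizer'' into a legitimate transformation of the random variable rather than a purely algebraic rewriting. Everything else—the existence of a stabilizer anticommuting with $E_s$, the identity $V|_{S_s}=-\mathrm{Id}$, and invariance of the syndrome weight—is immediate from the stabilizer description of $\mathcal C$ in Section~2 together with Lemmas~\ref{Lem:StatesOfSk} and~\ref{Lem:ProbabilityOfPk}.
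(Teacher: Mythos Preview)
Your argument is correct and takes a more conceptual route than the paper's. The paper proves the same central symmetry by brute-force symbolic computation: for each $s$ it exhibits an explicit subset $\mathcal V_s\subset\mathcal V$, containing no variable $A_u$, such that flipping the signs of those variables (via the angle reflections of Formula~(\ref{For:SignChanges})) reverses the sign of every monomial of $a_s,b_s,c_s,d_s$ while leaving $P_s$, $E[P_s]$ and the density $f(\theta_0^0)\cdots f(\theta_0^4)$ unchanged; the required parity property of each $\mathcal V_s$ is verified case by case in the companion file~\cite{LPFM}. Your proof supplies the structural explanation for why such $\mathcal V_s$ exist: they are exactly the sign patterns induced by conjugation by a stabilizer $V$ anticommuting with $E_s$. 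Because single-qubit Pauli conjugation flips an even subset of $\{B_u,C_u,D_u\}$ and never touches $A_u=\cos\theta_0^u$, preservation of the density and the volume element is automatic, and the identity $V|_{S_s}=-\mathrm{Id}$ yields the sign reversal of $(\beta_{2s},\beta_{2s+1})$ in one line rather than sixty-four monomial checks. The trade-off is that the paper's approach, while opaque, needs only the explicit basis $B$ and the polynomials $a_s,b_s,c_s,d_s$, whereas yours relies on the stabilizer description of $\mathcal C$ that Section~2 mentions but does not develop. Your method has the further advantage of generalizing verbatim to any non-degenerate stabilizer code under the same qubit-independent error model, something the paper's computation-by-computation verification cannot reveal.
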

\begin{proof}
The probability distribution of
$
\displaystyle
{\tilde\Phi}_s=\dfrac{1}{\sqrt{P_s}}(\beta_{2s}|2s\rangle+\beta_{2s+1}|2s+1\rangle)
$
is
$
\displaystyle
\dfrac{P_s}{E[P_s]}
$.
We are going to prove that ${\tilde \Phi}_1$ has central symmetry. The proof of the remaining $14$ cases is analogous.

According to Lemma~\ref{Lem:StatesOfSk} the generators of $S_1$ are
$$
\begin{array}{l}
\beta_2=(a_1+ib_1)u_0 + (-b_1+ia_1)u_1 + (c_1+id_1)u_2  + (-d_1+ic_1)u_3\quad \text{and} \\
\beta_3=(c_1-id_1)u_0 + (d_1+ic_1)u_1  + (-a_1+ib_1)u_2 + (-b_1-ia_1)u_3\,,
\end{array}
$$
where
{\scriptsize
$$
\begin{array}{clcccccccc}
a_1 & =   & A_0A_1A_3B_2D_4 & + & A_0A_2A_4B_3D_1 & + & A_0B_1C_3C_4D_2 & + & A_0B_4C_1C_2D_3 & - \\
    &     & A_1A_2B_0B_4C_3 & + & A_1A_4C_0D_2D_3 & - & A_1B_3C_2C_4D_0 & + & A_2A_3C_0C_1C_4 & - \\
    &     & A_2B_1D_0D_3D_4 & - & A_3A_4B_0B_1C_2 & - & A_3B_4D_0D_1D_2 & - & A_4B_2C_1C_3D_0 & - \\
    &     & B_0B_2C_4D_1D_3 & - & B_0B_3C_1D_2D_4 & + & B_1B_2B_3B_4C_0 & + & C_0C_2C_3D_1D_4, & \\
b_1 & = - & A_0A_1B_3B_4D_2 & - & A_0A_2C_1C_3D_4 & - & A_0A_3C_2C_4D_1 & - & A_0A_4B_1B_2D_3 & + \\
    &     & A_1A_2A_3A_4D_0 & + & A_1B_0C_2D_3D_4 & - & A_1B_2C_0C_3C_4 & + & A_2B_0B_1B_3C_4 & - \\
    &     & A_2B_4C_0D_1D_3 & + & A_3B_0B_2B_4C_1 & - & A_3B_1C_0D_2D_4 & + & A_4B_0C_3D_1D_2 & - \\
    &     & A_4B_3C_0C_1C_2 & + & B_1B_4C_2C_3D_0 & + & B_2B_3D_0D_1D_4 & + & C_1C_4D_0D_2D_3, & \\
c_1 & =   & A_0A_1A_4C_2C_3 & + & A_0A_2A_3B_1B_4 & + & A_0B_2B_3C_1C_4 & + & A_0D_1D_2D_3D_4 & + \\
    &     & A_1A_2B_3C_0D_4 & - & A_1A_3B_0C_4D_2 & - & A_1B_2B_4D_0D_3 & - & A_2A_4B_0C_1D_3 & - \\
    &     & A_2C_3C_4D_0D_1 & + & A_3A_4B_2C_0D_1 & - & A_3C_1C_2D_0D_4 & - & A_4B_1B_3D_0D_2 & - \\
    &     & B_0B_1B_2C_3D_4 & - & B_0B_3B_4C_2D_1 & + & B_1C_0C_2C_4D_3 & + & B_4C_0C_1C_3D_2, & \\
d_1 & =   & A_0A_1A_2C_4D_3 & + & A_0A_3A_4C_1D_2 & + & A_0B_1B_3C_2D_4 & + & A_0B_2B_4C_3D_1 & + \\
    &     & A_1A_3B_4C_0C_2 & - & A_1A_4B_0B_2B_3 & - & A_1C_3D_0D_2D_4 & - & A_2A_3B_0D_1D_4 & + \\
    &     & A_2A_4B_1C_0C_3 & - & A_2B_3B_4C_1D_0 & - & A_3B_1B_2C_4D_0 & - & A_4C_2D_0D_1D_3 & - \\
    &     & B_0B_1B_4D_2D_3 & - & B_0C_1C_2C_3C_4 & + & B_2C_0C_1D_3D_4 & + & B_3C_0C_4D_1D_2. \\
\end{array}
$$
}

We are going to use changes in the angles that define the variables of the set $\mathcal V$ so that all monomials of $a_s$, $b_s$, $c_s$ and $d_s$, for a given $0\leq s< 16$, change sign. The changes that produce the sign flip of variables of $\mathcal V$ are:
\begin{equation}
\label{For:SignChanges}
\begin{array}{|c|c|}
\noalign{\hrule}
\text{Angle change} & \text{\ Variables whose sign changes\ } \\
\noalign{\hrule}
\theta_0^u \to \pi-\theta_0^u & A_u \\
\noalign{\hrule}
\theta_1^u \to \pi-\theta_1^u & B_u \\
\noalign{\hrule}
\ \theta_2^u \to (\pi-\theta_2^u)\ \text{mod}\ 2\pi\ & C_u \\
\noalign{\hrule}
\theta_2^u \to 2\pi-\theta_2^u & D_u \\
\noalign{\hrule}
\theta_2^u \to (\theta_2^u+\pi)\ \text{mod}\ 2\pi & C_u\ \text{and}\ D_u \\
\noalign{\hrule}
\end{array}\,,\quad 0\leq u< 5.
\end{equation}
Note that these changes allow to flip exactly the sign of any subset of variables of ${\mathcal V}$.

So, to change the signs of all the monomials of $a_s$, $b_s$, $c_s$ and $d_s$, $0\leq s< 16$, it is enough to find a subset of variables of $\mathcal V$, ${\mathcal V}_s$, such that each of the monomials includes an odd number of variables of ${\mathcal V}_s$. This subset exists for all $0\leq s< 16$ and if $s\neq 0$ the subset ${\mathcal V}_s$ can be chosen so that it does not include any variable $A_u$, for all $0\leq u< 5$.

For $s=1$ a subset ${\mathcal V}_1$ is for example the following (See all sets ${\mathcal V}_s$, $0<s<16$, and the proof that they verify the required property in~\cite{LPFM}):
$$
{\mathcal V}_1=\{\,B_3,\,B_4,\,C_0,\,C_2,\,D_0,\,D_2,\,D_3,\,D_4\,\}
$$
As we have said, changing the sign of the variables of ${\mathcal V}_1$ we change the sign of all monomials from $a_1$, $b_1$, $c_1$ and $d_1$. However, both $P_1$ and $E[P_1]$ remain invariant. In the first case, because each monomial of $P_1$ is the product of two monomials of $a_1$, $b_1$, $c_1$ or $d_1$ (see Lemma~\ref{Lem:ProbabilityOfPk}) and, therefore, they have an even number of variables that change sign. In the second case because $E[P_1]$ does not depend on the angles that have been modified (see Theorem~\ref{Thm:ProbabilidadPk}). This causes the resulting quantum state $\tilde\Phi_1$ to transform into $-\tilde\Phi_1$. Finally, since the density function only depends on the angles $\theta_0^u$, $0\leq u< 5$, the probabilities of $\tilde\Phi_1$ and $-\tilde\Phi_1$ are the same. This equality is maintained when integrating in space $S_3^5$ and, therefore, the resulting quantum state $\tilde\Phi_1$ has central symmetry.
\end{proof}

Finally we are able to calculate the variance of the state ${\tilde\Phi}_s$ resulting from the application of the correction circuit of the ${\mathcal C}$ code if the measured syndrome $s$ satisfies $0< s< 16$.

\begin{lem}
\label{Lem:VarianzaConError1}
The variance of the corrected state ${\tilde\Phi}_s$ ($0< s< 16$) is
$$
V({\tilde\Phi}_s)=2.
$$
\end{lem}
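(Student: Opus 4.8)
The plan is to reduce the variance to the (vector-valued) conditional mean of the corrected state and then read off that this mean vanishes by the central symmetry already established in Theorem~\ref{Thm:Isotropy}. First I would note that, after applying the correction operator $E_s^{-1}$ to bring the projected state back into $\mathcal C$, the corrected state is the unit vector ${\tilde\Phi}_s=\frac{1}{\sqrt{P_s}}(\beta_{2s}|0_L\rangle+\beta_{2s+1}|1_L\rangle)$, while the errorless reference state is the unit vector $\Phi$. Expanding the squared norm and using $\|{\tilde\Phi}_s\|=\|\Phi\|=1$ gives $\|{\tilde\Phi}_s-\Phi\|^2=2-2\,\mathrm{Re}\langle\Phi|{\tilde\Phi}_s\rangle$, so that taking the conditional expectation given syndrome $s$ yields
$$V({\tilde\Phi}_s)=2-2\,\mathrm{Re}\langle\Phi|\,E[{\tilde\Phi}_s]\,\rangle,$$
where $E[{\tilde\Phi}_s]$ is the conditional mean of ${\tilde\Phi}_s$ as a vector of $\mathcal C$.

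Next I would argue that $E[{\tilde\Phi}_s]=0$. By the construction in the proof of Theorem~\ref{Thm:Isotropy}, the sign flip induced by the angle changes of Formula~(\ref{For:SignChanges}) on the variables of $\mathcal V_s$ is a measure-preserving involution $T$ of $\mathcal S_3^5$ (for $s\neq0$ the set $\mathcal V_s$ contains no $A_u$, so only the angles $\theta_1^u,\theta_2^u$, on which $f$ does not depend, are reflected) that flips $a_s,b_s,c_s,d_s$ simultaneously and hence sends ${\tilde\Phi}_s\mapsto-{\tilde\Phi}_s$ while leaving $P_s$ and the density function invariant. This is exactly the statement that the conditional law of ${\tilde\Phi}_s$, which has density $P_s/E[P_s]$ relative to $\prod_u f(\theta_0^u)\,d_{{\mathcal S}_3}$, is invariant under $\psi\mapsto-\psi$. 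Since the identity map $\psi\mapsto\psi$ is odd, its integral against this symmetric measure satisfies $E[{\tilde\Phi}_s]=-E[{\tilde\Phi}_s]$, so the conditional mean vanishes. Substituting into the displayed formula gives $V({\tilde\Phi}_s)=2$, independently of the initial code state $\Phi$, and this is the value one also obtains for the uniform distribution on the code sphere (cf.\ the remark preceding Theorem~\ref{Thm:Isotropy} and \cite{LPF}, Theorem 3).

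I do not expect a genuine obstacle, because the difficult part was already carried out in Theorem~\ref{Thm:Isotropy}; the only point requiring care is the conditioning. I must make explicit that $V({\tilde\Phi}_s)$ is an expectation with respect to the \emph{conditional} law given syndrome $s$, and that the central symmetry proved in Theorem~\ref{Thm:Isotropy} is precisely the $\psi\mapsto-\psi$ invariance of that conditional law; once this identification is spelled out, the vanishing of the mean is immediate.

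As a check that bypasses the vector-valued mean, I would compute directly in the special case $\Phi=|0_L\rangle$. By Lemma~\ref{Lem:StatesOfSk} one has $\mathrm{Re}\langle0_L|{\tilde\Phi}_s\rangle=a_s/\sqrt{P_s}$, and by Lemma~\ref{Lem:ProbabilityOfPk} one has $P_s=a_s^2+b_s^2+c_s^2+d_s^2$, so weighting by the conditional density $P_s/E[P_s]$ gives
$$V({\tilde\Phi}_s)=2-\frac{2\,E\bigl[a_s\sqrt{P_s}\,\bigr]}{E[P_s]}.$$
Since the involution $T$ flips $a_s$ while fixing $P_s$ and preserves the measure, the function $a_s\sqrt{P_s}$ is odd under $T$ and therefore has zero expectation, recovering $V({\tilde\Phi}_s)=2$.
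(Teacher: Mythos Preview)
Your proof is correct and follows essentially the same idea as the paper: both arguments invoke the central symmetry of Theorem~\ref{Thm:Isotropy} to kill the cross term in the expansion of $\|\tilde\Phi_s-\Phi\|^2$. The only cosmetic difference is that the paper averages pointwise, writing $V(\tilde\Phi_s)=\tfrac12\bigl(V(\tilde\Phi_s)+V(-\tilde\Phi_s)\bigr)=\tfrac12 E\bigl[\|\tilde\Phi_s-\Phi\|^2+\|{-\tilde\Phi_s}-\Phi\|^2\bigr]=\tfrac12 E[4]=2$ via the parallelogram identity, whereas you pass to the vector-valued conditional mean and observe $E[\tilde\Phi_s]=0$; these are the same cancellation expressed two ways.
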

\begin{proof}
According to Theorem~\ref{Thm:Isotropy} the probability of obtaining the $\tilde\Phi_s$ state is the same as the probability of obtaining the $-\tilde\Phi_s$ state. This property allows us to calculate the variance of $\tilde\Phi_s$ as follows:
$$
\begin{array}{lcl}
V(\tilde\Phi_s) & = & \dfrac{1}{2}\left(V(\tilde\Phi_s)+V(-\tilde\Phi_s)\right) = \dfrac{1}{2}E\left[\|\tilde\Phi_s-\Phi\|^2+\|-\tilde\Phi_s-\Phi\|^2\right] \\ \\
                & = & \dfrac{1}{2}E\left[2\|\tilde\Phi_s\|^2+2\|\Phi\|^2\right] = \dfrac{1}{2}E\left[4\right] = 2. \\
\end{array}
$$
\end{proof}

In order to calculate $V(\tilde\Phi)$, it is only necessary to calculate the variance of the random variable ${\tilde\Phi}_0$, that is, the variance of the corrected state if the correction circuit of the ${\mathcal C}$ code does not detect any error.

\begin{lem}
\label{Lem:VarianzaConError0}
The variance of the state ${\tilde\Phi}_0$ is
$$
V({\tilde\Phi}_0)=2-2\dfrac{E[a_0]}{E[P_0]}+2\dfrac{E\left[a_0\left(1-\sqrt{P_0}\right)\right]}{E[P_0]},
$$
where $a_0=\mathrm{Re}(\beta_0(1, 0, 0, 0))$.
\end{lem}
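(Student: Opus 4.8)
The plan is to reduce $V(\tilde\Phi_0)$ to a single scalar overlap and then average against the law of $\tilde\Phi_0$. When the measured syndrome is $0$ the correction circuit applies $E_0 = I$, so $\tilde\Phi_0$ is nothing but the normalized projection of $\Psi$ onto $\mathcal C = S_0$:
$$
\tilde\Phi_0 = \frac{1}{\sqrt{P_0}}\bigl(\beta_0|0_L\rangle + \beta_1|1_L\rangle\bigr),
\qquad P_0 = \|\Pi_0(\Psi)\|^2 = a_0^2 + b_0^2 + c_0^2 + d_0^2,
$$
the last equality being Lemma~\ref{Lem:ProbabilityOfPk}. Since $\tilde\Phi_0$ and $\Phi$ are both unit vectors, $\|\tilde\Phi_0 - \Phi\|^2 = 2 - 2\,\mathrm{Re}\langle\tilde\Phi_0|\Phi\rangle$, and the whole problem reduces to computing the real part of the overlap $\langle\Pi_0(\Psi)|\Phi\rangle$, where $\Pi_0(\Psi) = \beta_0|0_L\rangle + \beta_1|1_L\rangle$.

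The crux is the identity $\mathrm{Re}\langle\Pi_0(\Psi)|\Phi\rangle = a_0$, valid for every initial state $\Phi$. To establish it I would insert $\Phi_B = (w_0+iw_1,\,w_2+iw_3,\,0,\dots,0)_B$ together with the explicit coordinates of $\beta_0$ and $\beta_1$ furnished by Lemma~\ref{Lem:StatesOfSk} (recalling that $\beta_1$ follows the special form reserved for $s=0$, distinct from the generic $\beta_{2s+1}$), and expand
$$
\mathrm{Re}\langle\Pi_0(\Psi)|\Phi\rangle = \mathrm{Re}\bigl(\overline{\beta_0}(w_0+iw_1)\bigr) + \mathrm{Re}\bigl(\overline{\beta_1}(w_2+iw_3)\bigr).
$$
The terms carrying $b_0$, $c_0$ and $d_0$ cancel in pairs between the two summands, and what remains collects to $a_0(w_0^2+w_1^2+w_2^2+w_3^2) = a_0$ by normalization. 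This yields $\|\tilde\Phi_0 - \Phi\|^2 = 2 - 2a_0/\sqrt{P_0}$, independent of $\Phi$. Verifying these cancellations is the main obstacle: it is a routine but delicate bookkeeping exercise, and the fact that exactly $a_0$ survives is precisely what makes the special shape of $\beta_1$ in Lemma~\ref{Lem:StatesOfSk} essential.

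Finally I would average over the law of $\tilde\Phi_0$. As recorded in the proof of Theorem~\ref{Thm:Isotropy}, conditioning on syndrome $0$ weights the angle density by $P_0$ and renormalizes by $E[P_0]$, so
$$
V(\tilde\Phi_0) = \frac{E\bigl[P_0\bigl(2 - 2a_0/\sqrt{P_0}\bigr)\bigr]}{E[P_0]} = 2 - \frac{2\,E\bigl[a_0\sqrt{P_0}\bigr]}{E[P_0]}.
$$
Writing $E[a_0\sqrt{P_0}] = E[a_0] - E[a_0(1-\sqrt{P_0})]$ then rearranges this into the stated split form
$$
V(\tilde\Phi_0) = 2 - 2\frac{E[a_0]}{E[P_0]} + 2\frac{E\bigl[a_0(1-\sqrt{P_0})\bigr]}{E[P_0]},
$$
which isolates the leading ratio $E[a_0]/E[P_0]$ from the correction term involving $1-\sqrt{P_0}$, the latter being the piece one expects to control in the subsequent estimates. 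Unlike Lemma~\ref{Lem:VarianzaConError1}, no central-symmetry shortcut is available here, since for $s=0$ the sign-flipping set must contain some variable $A_u$ and would therefore alter the density; this is exactly why the direct overlap computation is required.
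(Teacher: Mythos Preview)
Your proposal is correct and follows essentially the same route as the paper. The paper expands $\|\sqrt{P_0}\,\Phi-\beta_0|0_L\rangle-\beta_1|1_L\rangle\|^2$ into four real squared terms via Lemma~\ref{Lem:StatesOfSk} and simplifies to $2P_0-2a_0\sqrt{P_0}$; your version recognizes this as $2P_0-2\sqrt{P_0}\,\mathrm{Re}\langle\Pi_0(\Psi)|\Phi\rangle$ and isolates the overlap identity $\mathrm{Re}\langle\Pi_0(\Psi)|\Phi\rangle=a_0$ directly, which is exactly the same cancellation hidden inside the paper's four-term expansion.
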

\begin{proof}
The probability distribution of the state
$
\displaystyle
{\tilde\Phi}_0=\dfrac{1}{\sqrt{P_0}}(\beta_0|0\rangle+\beta_1|1\rangle),
$
is
$
\displaystyle
\dfrac{P_0}{E[P_0]}
$. Therefore its variance is:
$$
\begin{array}{lcl}
V({\tilde\Phi}_0) & = & E\left[\dfrac{P_0}{E[P_0]}\left\|\Phi-\dfrac{1}{\sqrt{P_0}}(\beta_0|0\rangle+\beta_1|1\rangle)\right\|^2\right] \\ \\
& = & \dfrac{1}{E[P_0]} E\left[\left\|\sqrt{P_0}\Phi-\beta_0|0\rangle-\beta_1|1\rangle\right\|^2\right] \\ \\
& = & \dfrac{1}{E[P_0]} \left(E\left[\left|\sqrt{P_0}(w_0+iw_1)-\beta_0\right|^2\right] + E\left[\left|\sqrt{P_0}(w_2+iw_3)-\beta_1\right|^2\right]\right). \\ \\
\end{array}
$$

We use Lemma~\ref{Lem:StatesOfSk} for the coordinates $\beta_0$ and $\beta_1$:
$$
\begin{array}{lcl}
V({\tilde\Phi}_0) & = & \dfrac{1}{E[P_0]} \left(E\left[\left(\sqrt{P_0}w_0-a_0w_0+b_0w_1-c_0w_2+d_0w_3\right)^2\right] + \right. \\ \\
&   & \qquad\quad\ \left. E\left[\left(\sqrt{P_0}w_1-b_0w_0-a_0w_1-d_0w_2-c_0w_3\right)^2\right] + \right. \\ \\
&   & \qquad\quad\ \left. E\left[\left(\sqrt{P_0}w_2+c_0w_0+d_0w_1-a_0w_2-b_0w_3\right)^2\right] + \right. \\ \\
&   & \qquad\quad\ \left. E\left[\left(\sqrt{P_0}w_3-d_0w_0+c_0w_1+b_0w_2-a_0w_3\right)^2\right]\right) \\ \\
& = & \dfrac{1}{E[P_0]} E\left[P_0+a_0^2+b_0^2+c_0^2+d_0^2-2a_0\sqrt{P_0}\right]\left(w_0^2+w_1^2+w_2^2+w_3^2\right) \\ \\
& = & \dfrac{1}{E[P_0]} E\left[2P_0-2a_0\sqrt{P_0}\right] = 2-2\dfrac{E\left[a_0\sqrt{P_0}\right]}{E[P_0]} \\ \\
& = & 2-2\dfrac{E[a_0]}{E[P_0]}+2\dfrac{E\left[a_0\left(1-\sqrt{P_0}\right)\right]}{E[P_0]}. \\
\end{array}
$$
We have used Lemma~\ref{Lem:ProbabilityOfPk} for the expression $a_0^2+b_0^2+c_0^2+d_0^2$.
\end{proof}

Finally we have everything necessary to calculate the variance of the corrected state $\tilde\Phi$.

\begin{thm}
\label{Thm:VarianzaEstadoCorregido}
The variance of the corrected state $\tilde\Phi$ is
$$
V(\tilde\Phi)=V(\Psi)+2E\left[a_0\left(1-\sqrt{P_0}\right)\right],
$$
where $a_0=\mathrm{Re}(\beta_0(1, 0, 0, 0))$.
\end{thm}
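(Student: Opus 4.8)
The plan is to express $V(\tilde\Phi)$ as a syndrome-weighted average of the conditional variances $V(\tilde\Phi_s)$ that the preceding lemmas already provide. Conditioning on the error realization (the angles in ${\mathcal S}_3^5$), the correction circuit returns $\tilde\Phi_s$ with probability $P_s$; averaging $\|\tilde\Phi-\Phi\|^2$ first over the measurement outcome and then over the angles gives
$$
V(\tilde\Phi)=\sum_{s=0}^{15}E\left[P_s\,\|\tilde\Phi_s-\Phi\|^2\right]=\sum_{s=0}^{15}E[P_s]\,V(\tilde\Phi_s),
$$
since the probability distribution used to define each $V(\tilde\Phi_s)$ is precisely $P_s/E[P_s]$, exactly as in the proof of Lemma~\ref{Lem:VarianzaConError0}. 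This reduces the theorem to assembling the pieces already in hand.

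Next I would substitute the known conditional variances. By Lemma~\ref{Lem:VarianzaConError1} we have $V(\tilde\Phi_s)=2$ for every $0<s<16$, while Lemma~\ref{Lem:VarianzaConError0} supplies $V(\tilde\Phi_0)$. Inserting these and using that the syndrome probabilities sum to one, $\sum_{s=0}^{15}E[P_s]=1$ (so $\sum_{s=1}^{15}E[P_s]=1-E[P_0]$), the two $E[P_0]$ contributions cancel and one obtains
$$
V(\tilde\Phi)=2-2E[a_0]+2E\left[a_0\left(1-\sqrt{P_0}\right)\right].
$$

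The final step is to identify $2-2E[a_0]$ with $V(\Psi)$. The polynomial $a_0$ is a sum of $16$ monomials (Lemma~\ref{Lem:ShapeOfABCD}), each a product of one variable per qubit; by independence the expectation factorises across qubits, and since the density depends only on the angles $\theta_0^u$ one has $E[B_u]=E[C_u]=E[D_u]=0$ because the corresponding $\theta_1^u$ or $\theta_2^u$ integral vanishes. Hence every monomial containing a $B$, $C$ or $D$ factor drops out, and only the single all-$A$'s monomial $A_0A_1A_2A_3A_4$ survives, giving $E[a_0]=\left(4\pi\,{\bar E}[\cos(\theta_0)\sin^2(\theta_0)]\right)^5$. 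Comparing with Theorem~\ref{Thm:VarianceNQubit} for $n=5$ yields $2-2E[a_0]=V(\Psi)$, and the statement follows.

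The hard part will be justifying the variance decomposition in the first step cleanly: one must verify that the random variable $\tilde\Phi$ really is the mixture of the $\tilde\Phi_s$ with weights $E[P_s]$, and that the conditional density $P_s/E[P_s]$ used in Lemmas~\ref{Lem:VarianzaConError1} and~\ref{Lem:VarianzaConError0} is consistent with the single total expectation over ${\mathcal S}_3^5$. Once this bookkeeping is in place, the remainder is the routine cancellation displayed above together with the surviving-monomial computation for $E[a_0]$.
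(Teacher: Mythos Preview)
Your proposal is correct and follows essentially the same route as the paper: the same syndrome-weighted decomposition $V(\tilde\Phi)=\sum_s E[P_s]\,V(\tilde\Phi_s)$, the same substitution of Lemmas~\ref{Lem:VarianzaConError1} and~\ref{Lem:VarianzaConError0}, and the same surviving-monomial argument to identify $2-2E[a_0]$ with $V(\Psi)$ via Theorem~\ref{Thm:VarianceNQubit}. The only cosmetic difference is that the paper verifies $\sum_s E[P_s]=1$ by explicitly summing the formulas of Theorem~\ref{Thm:ProbabilidadPk} (writing it as $E[P_0]+15E[P_1]$ and reducing), whereas you invoke it directly as the total-probability identity; your shortcut is perfectly legitimate and slightly cleaner.
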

\begin{proof}
The variance of the corrected state satisfies:
$$
V(\tilde\Phi)=E[P_0]V({\tilde\Phi}_0)+\sum_{s=1}^{15}E[P_s]V({\tilde\Phi}_s).
$$

Applying Theorem~\ref{Thm:ProbabilidadPk} and Lemmas~\ref{Lem:VarianzaConError1} and~\ref{Lem:VarianzaConError0} the following is obtained:
$$
V(\tilde\Phi)=2(E[P_0]+15E[P_1])-2E[a_0]+2E\left[a_0\left(1-\sqrt{P_0}\right)\right].
$$

It is easily proven that
$$\begin{array}{lcl}
E[P_0]+15E[P_1] & = & (4\pi)^5\left(\bar E[\cos^2(\theta_0)\sin^2(\theta_0)] + \bar E[\sin^4(\theta_0)])^5\right)^5 \\ \\
& = & (4\pi)^5\bar E[\sin^2(\theta_0)]^5.
\end{array}
$$

The last expected value corresponds to the integral of the density function:
$$
\bar E[\sin^2(\theta_0)]=(4\pi)^{-1},\quad \text{therefore}\quad E[P_0]+15E[P_1]=1.
$$

The polynomial $a_0$ has $16$ monomials of which only $A_0A_1A_2A_3A_4$ has a non-zero expected value. The other $15$ are canceled because $E[B_u]=E[C_u]=E[D_u]=0$ for all $0\leq u< 5$. Then the expected value of $a_0$ is
$$
E[a_0]=(4\pi)^5\bar E[\cos(\theta_0)\sin^2(\theta_0)]^5.
$$

The proof is concluded by substituting $2-2(4\pi)^5\bar E[\cos(\theta_0)\sin^2(\theta_0)]^5$ for $V(\Psi)$, applying Theorem~\ref{Thm:VarianceNQubit}.
\end{proof}

\begin{thm}
\label{Thm:QCodesDoNotFixErrors}
The $5-$qubit quantum code does not fix qubit independent errors if the density function of the qubit error satisfies
\begin{equation}
\label{For:DensityFunctionCondition}
f(\theta_0)\geq f(\pi-\theta_0)\quad\text{for all}\quad \theta_0\in[0,\pi/2].
\end{equation}
\end{thm}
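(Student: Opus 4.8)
The code \emph{does not fix} the error precisely when $V(\tilde\Phi)\ge V(\Psi)$. By Theorem~\ref{Thm:VarianzaEstadoCorregido} we have $V(\tilde\Phi)-V(\Psi)=2E\left[a_0\left(1-\sqrt{P_0}\right)\right]$, so the plan is to prove
$$
E\left[a_0\left(1-\sqrt{P_0}\right)\right]\ge 0
$$
under hypothesis~\eqref{For:DensityFunctionCondition}. The factor $1-\sqrt{P_0}$ is nonnegative because $P_0=\|\Pi_0(\Psi)\|^2$ is a probability, so the sign of the integrand is that of $a_0$. Since $a_0$ is \emph{not} sign-definite (if it were, no condition on $f$ would be needed), the whole point is to use the monotonicity of $f$ to control the regions where $a_0<0$ via a reflection argument, in the spirit of Theorem~\ref{Thm:Isotropy}.

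First I would set up the reflection. As in Theorem~\ref{Thm:Isotropy} there is a set $\mathcal V_0$ of variables whose sign flip reverses $a_0,b_0,c_0,d_0$ simultaneously, hence sends $a_0\mapsto -a_0$ and leaves $P_0=a_0^2+b_0^2+c_0^2+d_0^2$ invariant. The difference with the cases $s\neq 0$ is crucial: the monomial $A_0A_1A_2A_3A_4$ belongs to $a_0$ and contains no $B,C,D$ factor, so $\mathcal V_0$ must include an \emph{odd} number of the variables $A_u$. Let $\mathcal A$ be the set of those qubits, $|\mathcal A|$ odd. Using Formula~\eqref{For:SignChanges} I realize the flip as an involution $\tau=\tau_0\circ\tau_1$ of $\mathcal S_3^5$, where $\tau_0:\theta_0^u\mapsto\pi-\theta_0^u$ for $u\in\mathcal A$ (this is what flips each $A_u$) and $\tau_1$ collects the $\theta_1,\theta_2$ changes that flip the chosen $B,C,D$ variables. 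Both preserve the volume element $dV$ of $\mathcal S_3^5$; but $\tau_0$ sends the density $W=\prod_u f(\theta_0^u)$ to $W'=\prod_{u\in\mathcal A}f(\pi-\theta_0^u)\prod_{u\notin\mathcal A}f(\theta_0^u)$, while $\tau_1$ leaves $W$ unchanged because the density depends only on the angles $\theta_0^u$.

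Next I would carry out the pairing. Changing variables by $\tau$ and using $a_0\circ\tau=-a_0$, $P_0\circ\tau=P_0$, $W\circ\tau=W'$ gives $\int a_0(1-\sqrt{P_0})\,W\,dV=-\int a_0(1-\sqrt{P_0})\,W'\,dV$, whence
$$
E\left[a_0\left(1-\sqrt{P_0}\right)\right]=\tfrac12\int a_0(1-\sqrt{P_0})\,(W-W')\,dV .
$$
Integrating out the angular variables $\theta_1^u,\theta_2^u$, on which $W-W'$ does not depend, this becomes $\tfrac12\int (W-W')\,G\,d\mu_0$, where $G(\boldsymbol\theta_0)=\int a_0(1-\sqrt{P_0})\,d(\text{angular})$ and $d\mu_0=\prod_u\sin^2(\theta_0^u)\,d\theta_0^u$. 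A short computation (again from $a_0\circ\tau=-a_0$, $P_0\circ\tau=P_0$ and the $\tau_1$-invariance of the angular measure) shows that $G$ is odd under $\tau_0$, that is $G\circ\tau_0=-G$. Splitting the $\theta_0$-domain into the region $R^+$ where $\prod_{u\in\mathcal A}f(\theta_0^u)\ge\prod_{u\in\mathcal A}f(\pi-\theta_0^u)$ and its $\tau_0$-image $R^-$, folding $R^-$ onto $R^+$ and using $G\circ\tau_0=-G$ collapses both halves to
$$
E\left[a_0\left(1-\sqrt{P_0}\right)\right]=\int_{R^+}(W-W')\,G\,d\mu_0 .
$$
By the hypothesis~\eqref{For:DensityFunctionCondition}, $f(\theta_0)\ge f(\pi-\theta_0)$ for $\theta_0\in[0,\pi/2]$ forces $W-W'\ge 0$ on $R^+$, so the density hypothesis enters exactly here.

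The main obstacle is then to show that the remaining fiber factor satisfies $G\ge 0$ on $R^+$, i.e. that the angular average of $a_0(1-\sqrt{P_0})$ is nonnegative when the flipped angles $\theta_0^u$ $(u\in\mathcal A)$ lie in $[0,\pi/2]$. This is genuinely delicate because $\sqrt{P_0}$ is not polynomial and $a_0$ changes sign. I would attack it through the explicit structure of $a_0$ in Lemma~\ref{Lem:ShapeOfABCD} (degree one in each qubit block, sixteen monomials sharing exactly one variable pairwise), the identity $1-P_0=\sum_{s\ge 1}P_s\ge 0$ which permits writing $a_0(1-\sqrt{P_0})=a_0(1-P_0)/(1+\sqrt{P_0})$, and a secondary reflection over the qubits not in $\mathcal A$; for the normal family of Definition~\ref{Def:Normal} this should collapse to a one-variable monotonicity in $\sigma$ that can be verified directly. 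Conceptually, the reductions above isolate the whole difficulty into this single nonnegativity, which expresses that the gain produced on the no-detection event $s=0$ can never outweigh the maximal variance $2$ contributed by each detection event $s\ge 1$ (Lemma~\ref{Lem:VarianzaConError1}); the monotone density hypothesis is precisely what is needed to guarantee it.
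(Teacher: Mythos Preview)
Your reduction to $E[a_0(1-\sqrt{P_0})]\ge 0$ and the reflection idea are correct, but the proof stalls precisely where you say it does: after a \emph{single} fold you are left with proving $G\ge 0$ on $R^+$, and the suggestions you offer (rewriting $1-\sqrt{P_0}=(1-P_0)/(1+\sqrt{P_0})$, secondary reflections, checking the normal family) do not close this gap. With only one reflection set $\mathcal V_0$ you fold only the $\theta_0^u$ with $u\in\mathcal A$; on the resulting region the integrand $a_0(1-\sqrt{P_0})$ is still sign-indefinite because the fifteen monomials of $a_0$ containing $B,C,D$ factors do not become nonnegative, so there is no reason for the fiber integral $G$ to be nonnegative.

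The paper's argument supplies exactly the idea you are missing. First, treat the sixteen monomials of $a_0$ separately. For each of the fifteen monomials containing at least one $B_u$, $C_u$ or $D_u$, there is a reflection set $\mathcal V''$ that flips the sign of every monomial of $a_0,b_0,c_0,d_0$ (so $P_0$ is preserved) and meets that particular monomial in a \emph{single non-$A$} variable. The associated change of variables involves only $\theta_1,\theta_2$ angles, hence leaves the density $\prod_u f(\theta_0^u)$ untouched, and the fold gives $E[\text{(monomial)}(1-\sqrt{P_0})]=0$ exactly. Second, for the sole surviving monomial $A_0A_1A_2A_3A_4$ the paper exhibits \emph{five} reflection sets $\mathcal V_0',\dots,\mathcal V_4'$, each flipping all monomials of $a_0,b_0,c_0,d_0$ and meeting $\{A_0,\dots,A_4\}$ in the singleton $\{A_u\}$. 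Applying these five folds in succession collapses the $\theta_0$-domain to $[0,\pi/2]^5$ and produces the density factor $\prod_{u=0}^4\bigl(f(\theta_0^u)-f(\pi-\theta_0^u)\bigr)$; on this cube $A_0\cdots A_4=\prod_u\cos\theta_0^u\ge 0$, $1-\sqrt{P_0}\ge 0$, and the density factor is $\ge 0$ by hypothesis~\eqref{For:DensityFunctionCondition}, so the contribution is manifestly nonnegative. The point you should take away is that one reflection is not enough: you need a \emph{family} of sign-flip sets tailored monomial by monomial, with five of them isolating each $A_u$ individually for the leading term.
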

\begin{proof}
To conclude the proof it is enough to prove that $\displaystyle V(\tilde\Phi)-V(\Psi)\geq 0$. For this we must analyze the following variables:
{\scriptsize
$$
\begin{array}{clcccccccc}
a_0 & =   & A_0 A_1 A_2 A_3 A_4 & + & A_0 B_1 B_4 C_2 C_3 & + & A_0 B_2 B_3 D_1 D_4 & + & A_0 C_1 C_4 D_2 D_3 & + \\
    &     & A_1 B_0 B_2 C_3 C_4 & + & A_1 B_3 B_4 D_0 D_2 & + & A_1 C_0 C_2 D_3 D_4 & + & A_2 B_0 B_4 D_1 D_3 & + \\
    &     & A_2 B_1 B_3 C_0 C_4 & + & A_2 C_1 C_3 D_0 D_4 & + & A_3 B_0 B_1 D_2 D_4 & + & A_3 B_2 B_4 C_0 C_1 & + \\
    &     & A_3 C_2 C_4 D_0 D_1 & + & A_4 B_0 B_3 C_1 C_2 & + & A_4 B_1 B_2 D_0 D_3 & + & A_4 C_0 C_3 D_1 D_2, & \\
b_0 & =   & A_0 A_1 B_3 C_2 C_4 & + & A_0 A_2 B_1 D_3 D_4 & + & A_0 A_3 B_4 D_1 D_2 & + & A_0 A_4 B_2 C_1 C_3 & + \\
    &     & A_1 A_2 B_4 C_0 C_3 & + & A_1 A_3 B_2 D_0 D_4 & + & A_1 A_4 B_0 D_2 D_3 & + & A_2 A_3 B_0 C_1 C_4 & + \\
    &     & A_2 A_4 B_3 D_0 D_1 & + & A_3 A_4 B_1 C_0 C_2 & + & B_0 B_1 B_2 B_3 B_4 & + & B_0 C_2 C_3 D_1 D_4 & + \\
    &     & B_1 C_3 C_4 D_0 D_2 & + & B_2 C_0 C_4 D_1 D_3 & + & B_3 C_0 C_1 D_2 D_4 & + & B_4 C_1 C_2 D_0 D_3, & \\
c_0 & = - & A_0 A_1 C_3 D_2 D_4 & - & A_0 A_2 B_3 B_4 C_1 & - & A_0 A_3 B_1 B_2 C_4 & - & A_0 A_4 C_2 D_1 D_3 & - \\
    &     & A_1 A_2 C_4 D_0 D_3 & - & A_1 A_3 B_0 B_4 C_2 & - & A_1 A_4 B_2 B_3 C_0 & - & A_2 A_3 C_0 D_1 D_4 & - \\
    &     & A_2 A_4 B_0 B_1 C_3 & - & A_3 A_4 C_1 D_0 D_2 & - & B_0 B_2 C_1 D_3 D_4 & - & B_0 B_3 C_4 D_1 D_2 & - \\
    &     & B_1 B_3 C_2 D_0 D_4 & - & B_1 B_4 C_0 D_2 D_3 & - & B_2 B_4 C_3 D_0 D_1 & - & C_0 C_1 C_2 C_3 C_4, & \\
d_0 & =   & A_0 A_1 B_2 B_4 D_3 & + & A_0 A_2 C_3 C_4 D_1 & + & A_0 A_3 C_1 C_2 D_4 & + & A_0 A_4 B_1 B_3 D_2 & + \\
    &     & A_1 A_2 B_0 B_3 D_4 & + & A_1 A_3 C_0 C_4 D_2 & + & A_1 A_4 C_2 C_3 D_0 & + & A_2 A_3 B_1 B_4 D_0 & + \\
    &     & A_2 A_4 C_0 C_1 D_3 & + & A_3 A_4 B_0 B_2 D_1 & + & B_0 B_1 C_2 C_4 D_3 & + & B_0 B_4 C_1 C_3 D_2 & + \\
    &     & B_1 B_2 C_0 C_3 D_4 & + & B_2 B_3 C_1 C_4 D_0 & + & B_3 B_4 C_0 C_2 D_1 & + & D_0 D_1 D_2 D_3 D_4. & \\
\end{array}
$$
}
Note that by Lemma~\ref{Lem:ProbabilityOfPk} $P_0=a_0^2+b_0^2+c_0^2+d_0^2$.

We are going to prove that the expected value of each monomial of $a_0$ multiplied by $1-\sqrt{P_0}$ is greater than or equal to $0$.

For the first monomial of $a_0$, $A_0A_1A_2A_3A_4$, there are five sets of variables
$$
\begin{array}{rcl}
{\mathcal V}_0^\prime & = & \{\,A_0, B_0, B_3, B_4, C_2, D_2, D_3, D_4\,\}, \\
{\mathcal V}_1^\prime & = & \{\,A_1, B_1, B_2, B_3, C_4, D_2, D_3, D_4\,\}, \\
{\mathcal V}_2^\prime & = & \{\,A_2, B_0, C_0, C_3, C_4, D_2, D_3, D_4\,\}, \\
{\mathcal V}_3^\prime & = & \{\,A_3, B_1, B_2, B_4, C_2, D_1, D_3, D_4\,\}, \\
{\mathcal V}_4^\prime & = & \{\,A_4, B_1, C_1, C_2, C_3, D_2, D_3, D_4\,\} \\
\end{array}
$$
such that (See the proof that they verify the required property in~\cite{LPFM}):
\begin{itemize}
\item[a)] The intersection of each of the previous sets with the set of variables of any monomial of $a_0$, $b_0$, $c_0$ or $d_0$ has an odd cardinal.
\item[b)] The intersection of each of the previous sets with the set of variables of the monomial $A_0A_1A_2A_3A_4$ is $\{A_0\}$, $\{A_1\}$, $\{A_2\}$, $\{A_3\}$ and $\{A_4\}$ respectively.
\end{itemize}
We can associate to each one of these sets a change of variable, according to the pattern given in Formula~(\ref{For:SignChanges}), in such a way that only one of the variables of the monomial $A_0A_1A_2A_3A_4$ changes sign while the rest of the variables and $P_0$ remain unchanged.

This allows us to express the integral of $A_u(1-\sqrt{P_0})$ with respect to the angle $\theta_0^u$, $0\leq u< 5$, as follows:
$$
\begin{array}{rcl}
\displaystyle \int_0^\pi A_u(1-\sqrt{P_0})fd\theta_0^u & = & \displaystyle \int_0^{\pi/2} A_u(1-\sqrt{P_0})f(\theta_0^u)d\theta_0^u + \\ \\
 &   & \displaystyle \int_{\pi/2}^\pi A_u(1-\sqrt{P_0})f(\theta_0^u)d\theta_0^u \\ \\
 & = & \displaystyle \int_0^{\pi/2} A_u(1-\sqrt{P_0})f(\theta_0^u)d\theta_0^u - \\ \\
 &   & \displaystyle \int_0^{\pi/2} A_u(1-\sqrt{P_0})f(\pi-\theta_0^{\prime u})d\theta_0^{\prime u} \\ \\
 & = & \displaystyle \int_0^{\pi/2} A_u(1-\sqrt{P_0})(f(\theta_0^u)-f(\pi-\theta_0^u))d\theta_0^u. \\
\end{array}
$$
In the integral between $\pi/2$ and $\pi$ we have used the change of variable for $\theta_0^u$ given in Formula~(\ref{For:SignChanges}),
$$
\theta_0^u=\pi-\theta_0^{\prime u},
$$
and the properties mentioned above about the variable change associated to the set of variables ${\mathcal V}_u^\prime$.

Applying the previous result to all the variables of $A_0A_1A_2A_3A_4$ it is obtained that:
$$
E\left[A_0A_1A_2A_3A_4\left(1-\sqrt{P_0}\right)\right] = E[{\mathcal I}],\ \text{where}
$$
$$
\begin{array}{rl}
\displaystyle {\mathcal I}=\int_0^{\pi/2}\cdots \int_0^{\pi/2} A_0A_1A_2A_3A_4\left(1-\sqrt{P_0}\right) & (f(\theta_0^0)-f(\pi-\theta_0^0))\cdots \\ \\
 & (f(\theta_0^4)-f(\pi-\theta_0^4))d\theta_0^0\cdots d\theta_0^4. \\
\end{array}
$$
Given that in the integral ${\mathcal I}$ all the terms are greater than or equal to $0$, the expected value of ${\mathcal I}$ will be greater than or equal to zero and therefore it is satisfied
$$
E\left[A_0A_1A_2A_3A_4\left(1-\sqrt{P_0}\right)\right]\geq 0.
$$

The expected value of the remaining monomials of $a_0$ is equal to $0$. The proof in all cases is analogous, therefore we will only do it for the second monomial of $a_0$: $A_0B_1B_4C_2C_3$. For this monomial there is a set of variables
$$
{\mathcal V}_2^{\prime\prime} = \{\,A_1,B_1,B_2,B_3,C_4,D_2,D_3,D_4\,\}
$$
such that (See all sets ${\mathcal V}_s^{\prime\prime}$, $2\leq s<16$, and the proof that they verify the required property in~\cite{LPFM}):
\begin{itemize}
\item[a)] The intersection of $S$ with the set of variables of any monomial of $a_0$, $b_0$, $c_0$ or $d_0$ has an odd cardinal.
\item[b)] The intersection of $S$ with the set of variables of the monomial $A_0B_1B_4C_2C_3$ is $\{B_1\}$, that is, a set with a single variable that is not $A$.
\end{itemize}

This allows us to express the integral of $B_1(1-\sqrt{P_0})$ with respect to the angle $\theta_1^1$ as follows:
$$
\begin{array}{rcl}
\displaystyle \int_0^\pi B_1(1-\sqrt{P_0})fd\theta_1^1 & = & \displaystyle \int_0^{\pi/2} B_1(1-\sqrt{P_0})fd\theta_1^1 + \int_{\pi/2}^\pi B_1(1-\sqrt{P_0})fd\theta_1^1 \\ \\
 & = & \displaystyle \int_0^{\pi/2} B_1(1-\sqrt{P_0})fd\theta_1^1 - \int_0^{\pi/2} B_1(1-\sqrt{P_0})fd\theta_1^{\prime 1} \\ \\
 & = & 0. \\
\end{array}
$$
In the integral between $\pi/2$ and $\pi$ we have used the change of variable for $\theta_1^1$ given in Formula~(\ref{For:SignChanges}),
$$
\theta_1^1=\pi-\theta_1^{\prime 1},
$$
and the properties mentioned above about the variable change associated to the set of variables ${\mathcal V}_2^{\prime\prime}$.

Applying the previous result it is obtained that:
$$
E\left[A_0B_1B_4C_2C_3\left(1-\sqrt{P_0}\right)\right] = 0.
$$

Finally, it is proven that
$$
V(\tilde\Phi)-V(\Psi)=E\left[a_0\left(1-\sqrt{P_0}\right)\right]\geq 0.
$$
\end{proof}

Note that Formula~(\ref{For:DensityFunctionCondition}) is a sufficient condition for the $5-$qubit quantum code not to fix qubit independent errors. It is fulfilled by normal density functions $f:[0,\pi]\to\R$ such as, for example, non-increasing monotonic functions and functions with support $[0,\pi/2]$. However, Theorem~\ref{Thm:QCodesDoNotFixErrors} holds for more general density functions, regardless of whether they meet Formula~(\ref{For:DensityFunctionCondition}). It is enough that they satisfy that
$$
E\left[a_0\left(1-\sqrt{P_0}\right)\right]\geq 0.
$$

\subsection{Application to the example}

The normal distribution introduced in Definition~\ref{Def:Normal} satisfies the following properties:

$$
\begin{array}{l}
E[P_0] = \dfrac{1+15\sigma^8}{16}, \\ \\
E[P_s] = \dfrac{1-\sigma^8}{16} \quad\text{for all}\ 0< s< 16. \\ \\
\end{array}
$$

\noindent They are obtained from Theorem~\ref{Thm:ProbabilidadPk}, using the results of the Appendix.

On the other hand, this distribution function satisfies the equation (\ref{For:DensityFunctionCondition}) and consequently fulfills the theorem~\ref{Thm:QCodesDoNotFixErrors}, that is, the $5-$qubit quantum code does not fix qubit independent errors with this density function.

\section{Conclusions}

In this article we have analyzed the ability of the $5-$qubit quantum error correcting code to handle errors in quantum computing. We have presented a study similar to the one carried out by the authors in the case of isotropic quantum computing errors~\cite{LPF} and, despite the radically different characteristics of the two types of error, the results are surprisingly similar.

An important feature of quantum errors introduced in~\cite{LP}, in addition to variance, is the shape of their density functions, particularly the dimension of their supports (set of points in the domain where the density function is greater than zero). Thereby the support of a density function that represents $n-$qubit independent quantum computing errors has dimension $4n$, far from the support dimension of an isotropic error that is equal to $d=2^{n+1}-1$. However, the sum of quantum errors can increase the dimension of the support of the resulting error, since the dimension of the support of the sum of two independent errors in less than or equal to the sum of the dimensions of the respective supports. But this growth has a limit. The local quantum computing error model imposes a limitation on the growth of the dimension of the error support. For example, the support of $n-$qubit independent quantum computing errors has the maximum dimension indicated above.

Surprisingly the big difference in the support dimension of isotropic error versus qubit independent errors does not translate into a different behavior of the quantum error correcting codes against these two types of error. Both types of error have a global behavior that makes it difficult for quantum codes to control them. The final state $\tilde\Phi_s$ when an $s$ syndrome is detected, $s>0$, reaches the maximum variance $2$ in both cases, regardless of the error distribution function. In the case of isotropic error because the resulting distribution function is uniform (see~\cite{LPF}, Theorem 3) and in the case of qubit independent errors because the resulting distribution function is centrally symmetric (see Theorem~\ref{Thm:Isotropy} and Lemma~\ref{Lem:VarianzaConError1}). This means that if an error is detected in the code correcting circuit, in both cases the computing information has been completely lost. Furthermore, neither quantum codes fix isotropic errors (see~\cite{LPF}, Theorem 5) nor does the 5-qubit code fix independent qubit errors (see Theorem~\ref{Thm:QCodesDoNotFixErrors}).

In the analysis we have used the variance instead of the quantum variance due to the enormous difficulties that the latter poses in the calculations. The fact that the distribution of the corrected state is centrally symmetric if the code has detected an error, could lead us to think that when passing to the quantum variance, considering the phase factor, the error will be greatly reduced. However, this is not the case because the central symmetry property indicates that the probability distribution of the error in the corrected state is very close to being uniform. As we have analyzed in the introduction for error operators concentrated around the identity operator, the quantum variance and the variance will have similar behaviors.

In this work we have only analyzed the $5-$qubit quantum code. We have chosen it for its symmetry that has allowed us to complete the long and complicated calculations. But, to what extent the results obtained in this case are generally applicable? We believe that the pattern of behavior demonstrated for the $5-$qubit quantum code is general. A general quantum code, probably much less symmetric, will follow the pattern presented with small perturbations as occurs, for example, with the Fourier transform of a periodic function when passing from a domain that is a multiple of the period of the function to one that is not. Zero-width peaks at multiples of the frequency lose height and widen slightly due to loss of domain symmetry. In this sense, we are studying through numerical simulations the behavior of the Shor $9-$qubit quantum code~\cite{Sh2} and the $7-$qubit quantum code~\cite{St2}.

The results obtained in~\cite{LPF} and, especially, those presented in this article force us to rethink error control in quantum computing. The discretized model of errors that was so useful to design quantum codes seems not to be able to integrate all the subtleties of errors in quantum computing that are essentially continuous. Assuming that a discrete error occurs in a qubit with probability $p$, $0<p<1$, that a quantum code that corrects errors in a qubit leaves only errors with probability $p^2$, $p^3$... uncorrected and that the concatenation of codes allows us to further reduce the probability of uncorrected errors, is not enough to control subtle quantum errors. Actually the probability of errors occurring in all qubits at the same time is 1. Each of the errors is very small compared to a discrete error but they all occur simultaneously.

\section{Appendix}

The values of the integrals that have been used throughout the article are included in this appendix:

\vskip0.25cm \hskip1.5cm
$
\displaystyle \int_{0}^{\pi}\frac{\sin^2(\theta_0)}{(1+\sigma^2-2\sigma\cos(\theta_0))^d}d\theta_0=
\frac{\pi}{2}\frac{1}{(1-\sigma^2)},
$

\vskip0.25cm \hskip1.5cm
$
\displaystyle \int_{0}^{\pi}\frac{\cos(\theta_0)\sin^2(\theta_0)}{(1+\sigma^2-2\sigma\cos(\theta_0))^d}d\theta_0=
\frac{\pi}{2}\frac{\sigma}{(1-\sigma^2)},
$

\vskip0.25cm \hskip1.5cm
$
\displaystyle \int_{0}^{\pi}\frac{\sin^4(\theta_0)}{(1+\sigma^2-2\sigma\cos(\theta_0))^d}d\theta_0= \frac{3}{8}\pi,
$

\vskip0.25cm \hskip1.5cm
$
\displaystyle \int_{0}^{\pi}\frac{\cos^2(\theta_0)\sin^2(\theta_0)}{(1+\sigma^2-2\sigma\cos(\theta_0))^d}d\theta_0 = \frac{\pi}{8}\,\frac{1+3\sigma^2}{1-\sigma^2}.
$

\vskip0.25cm \noindent The surface of the unit spheres of dimensions 2 and 1 are $|{\mathcal S}_2| = 4\pi$ and $|{\mathcal S}_1| = 2\pi$.

\end{document}